\documentclass[12pt]{amsart}

\usepackage[T1]{fontenc}
\usepackage[utf8]{inputenc}
\usepackage[english]{babel}

\usepackage{amssymb}
\usepackage{array}
\usepackage[utf8]{inputenc}
\usepackage[english]{babel}
\usepackage{nicefrac}

\usepackage{setspace} 
\usepackage[a4paper,tmargin=1in,bmargin=1.2in ,lmargin=1in,rmargin=1in,headheight=1in,headsep=1in,footskip=1.5\baselineskip]{geometry}

\usepackage{natbib}
\usepackage{amssymb}
\usepackage{fancyhdr}
\usepackage{bbm}
\usepackage{xcolor}
\usepackage{hyperref}
\hypersetup{
  colorlinks=true,
  linkcolor=blue,
  citecolor=blue
}
\usepackage{mleftright}
\usepackage{subcaption}
\usepackage{paralist}
\usepackage{comment}

\usepackage{caption}
\usepackage[labelfont=md]{subcaption}
\usepackage[normalem]{ulem}

\newtheorem{theorem}{Theorem}
\newtheorem*{theorem*}{Theorem}
\newtheorem{lemma}{Lemma}

\newtheorem{corollary}[theorem]{Corollary}

\theoremstyle{definition}

\newtheorem*{definition*}{Definition}

\newtheorem*{lemma*}{Lemma}
\usepackage{graphicx}
\usepackage{pstricks, enumerate, pst-node, pst-text, pst-plot}

\numberwithin{equation}{section}

\newcommand{\PP}{\mathbb{P}}
\newcommand{\EE}{\mathbb{E}}
\newcommand{\eps}{\varepsilon}

\usepackage{xparse}
\DeclareDocumentCommand\Pr{ m g }{\ensuremath{
    {   \IfNoValueTF {#2}
      {\mathbb{P}\mleft[{#1}\mright]}
      {\mathbb{P}\mleft[{#1}\middle\vert{#2}\mright]}%
    }
}}
\DeclareDocumentCommand\E{ m g }{\ensuremath{
    {   \IfNoValueTF {#2}
      {\mathbb{E}\mleft[{#1}\mright]}
      {\mathbb{E}\mleft[{#1}\middle\vert{#2}\mright]}%
    }
}}

\def\ee{\mathrm{e}}

\begin{document}

\title[]{On the Inefficiency of Social Learning}

\author{Florian Brandl}
\thanks{Department of Economics, University of Bonn. Email: \href{mailto:florian.brandl@uni-bonn.de}{florian.brandl@uni-bonn.de}}

\author{Wanying Huang}
\thanks{Department of Economics, Monash University. Email: \href{mailto:kate.huang@monash.edu}{kate.huang@monash.edu}}
\author{Atulya Jain}
\thanks{Department of Economics, University of Bonn. Email: \href{mailto:ajain@uni-bonn.de}{ajain@uni-bonn.de}}

\thanks{We would like to thank Nicolas Vieille, Omer Tamuz, Wade Hann-Caruthers, Krishna Dasaratha, and participants at RINSE'25 for helpful comments and discussions.}

\address{}

\date{\today}

\begin{abstract}

We study whether a social planner can improve the efficiency of learning, measured by the expected total welfare loss, in a sequential decision-making environment.  Agents arrive in order and each makes a binary action based on their private signal and the social information they observe. The planner can intervene by jointly designing the social information disclosed to agents and offering monetary transfers contingent on agents’ actions. We show that, despite such flexibility, efficient learning cannot be restored with a finite budget: whenever learning is inefficient without intervention, no combination of information disclosure and transfers can achieve efficient learning while keeping total expected transfers finite.

\end{abstract}

\maketitle

\section{Introduction}\label{sec:intro}
Many decisions under uncertainty are made in a social context, in which individuals rely on both their private information and the observed actions of others. Understanding how quickly privately held information is aggregated in a population is central to explaining collective outcomes across diverse domains\textemdash including the adoption of new technologies and medical treatments, investment decisions, and political opinion formation. Yet, this aggregation process need not be efficient: actions only provide coarse information about agents' private information and agents may persistently make mistakes. This raises a natural question: can a social planner intervene to accelerate learning and improve its efficiency?

To study this question, we consider the canonical sequential social learning model \citep*{BHW92a,Bane92a}, in which agents arrive in order and each receives a private signal. Unlike the standard model\textemdash where each agent observes the actions of all her predecessors\textemdash we introduce a social planner who can intervene by jointly designing the social information disclosed to the agents and offering monetary incentives to influence their decisions. Given the planner's intervention, each agent chooses a binary action in order to match an unknown binary state.

We measure the speed of learning using the criterion of \emph{efficient learning}, introduced by \cite{rosenberg2019efficiency}, which requires that the expected number of agents who choose incorrectly be finite. This  criterion captures not only eventual correctness of behavior\textemdash thereby ruling out incorrect herds\textemdash but also the welfare losses incurred along the path to convergence. In the absence of planner's intervention, \cite{smith2000pathological} show that with unbounded signals, the total number of agents who choose incorrectly must be finite, but its expectation may be either finite or infinite. Indeed, a finite expectation is achieved only when agents receive highly informative private signals with sufficiently high probability \citep{rosenberg2019efficiency}. This is a demanding condition that many signal distributions do not satisfy, so learning is often inefficient.

To illustrate the source of this inefficiency, suppose all private signals were made publicly observable. This is equivalent to a single agent who repeatedly receives private signals and takes an action in each period. As is well known, in this case learning is efficient for any non-trivial signal distribution, including noisy binary signals that are not unbounded.\footnote{More specifically, learning can be viewed as very efficient in this setting since not only does the total number of mistakes have finite expectation, but each agent’s mistake probability also converges to zero exponentially fast. See, e.g., \cite*{hann2018speed}, \cite*{rosenberg2019efficiency}, \cite*{HMST21a}, \cite*{HST24a}, and \cite*{brandl2025social} for related results on the speed of learning.}
 Thus, inefficient learning does not arise from a lack of information per se. Rather, it arises because observed actions are only partially informative about private signals and, more importantly, because these actions reflect agents' strategic behavior.

As we discuss in  Section~\ref{sec:benchmark}, in a benchmark case where the planner does not need to account for agents’ strategic incentives, efficient learning can always be achieved. This indicates that coarsening from signals to actions alone is not the primary driver of inefficient learning.\footnote{As another extreme, \cite*{arieli2017inferring} show that when actions are continuous, private signals are encoded one-to-one into actions and can therefore be inferred from observed behavior. Thus, learning is always efficient.} We therefore focus on agents' incentives\textemdash which depend  both on the social information they observe and on monetary rewards\textemdash and ask whether a social planner can improve learning efficiency by shaping these incentives. To do so, the planner has two instruments: for each agent, he can (i) design what information about earlier agents' actions is revealed, and (ii) issue a monetary transfer to subsidize one of the two actions contingent on this information.

We focus on a class of unbounded signals with well-behaved tails. This is a mild assumption that rules out distributions with irregular tail behavior and allows efficient learning to be characterized by a single parameter that measures tail thickness.      
Our main result (Theorem~\ref{thm:coarse-feedback}) shows that, despite the planner's flexibility, efficient learning cannot be restored with a finite budget: whenever the underlying signal distribution yields inefficient learning in the absence of intervention, no combination of disclosure policy and transfer scheme can achieve efficient learning with a finite expected total transfer. In particular, this result implies that no disclosure policy can, on its own, restore efficient learning (Corollary~\ref{cor:only-feedback}). Compared to the benchmark case, this result thus highlights agents' strategic incentives as the main source of inefficiency.

The mechanism behind Theorem~\ref{thm:coarse-feedback} is as follows. From the perspective of information design, the planner faces a fundamental tradeoff between providing precise social information to keep the current agent’s mistake probability low, and extracting enough information from the agent’s action to inform future agents. Since the agent  is strategic, receiving precise social information induces her to rely heavily on it and to largely ignore  her private signal. As a result, the more precise the social information, the less informative her action becomes, undermining the planner's ability to learn from it. Conversely, when the social information is  imprecise, the agent relies more on her own private signal, making her action more informative to the planner, but at the cost of a higher mistake probability. 

A similar tension arises when the planner uses monetary incentives.    Intuitively, subsidizing the contrarian action\textemdash that is, the action opposite to the one favored by social information\textemdash encourages agents to act on their private signals.
This accelerates learning as it allows the planner to infer more from their actions. 
Yet, if agents nonetheless continue to choose the action favored by social information despite such subsidies, it would strengthen future agents’ inferences about that action being correct,  forcing the planner to offer even larger transfers over time. Thus, while transfers can in principle alleviate this tension and restore efficient learning, doing so requires an infinite amount. 

In sum, we show that the aforementioned tensions are insurmountable: any disclosure policy or transfer scheme that provides agents with the appropriate incentives to achieve efficient learning necessarily either leaves the planner with too little information to sustain precise social beliefs or requires transfers so large that their expected total amount is infinite. 

We contribute to the literature of social learning in three aspects. First, previous work has largely focused on full (social) information disclosure, showing that insufficiently informative signal structures are the key determinant of inefficient learning.
By contrast, we consider general information disclosure policies and show that inefficiency stems from agents' strategic incentives, rather than from limitations of the signal structure alone.
Second, while different information structures have been shown to improve learning outcomes, our main result suggests that such improvements are fundamentally limited: they cannot restore efficient learning, even with the help of finite transfers. Third, our proof techniques directly use the induced distribution of agents’ beliefs, rather than analyzing trajectories of beliefs and actions across periods. These techniques could be useful in other social learning settings or, more generally, in multi-period learning environments.

\subsection*{Related Literature}
A closely related paper is \cite{rosenberg2019efficiency}, which characterizes the conditions for efficient learning in a sequential social learning environment with a general signal distribution. Moreover, they show that this condition holds regardless of whether agents observe the entire history of past actions or only the action of their immediate predecessor. 
This leaves open the question of whether there exists a social information disclosure policy under which learning is efficient whenever signals are unbounded. As suggested by our main result, the answer is negative, at least for tail-regular signals.

Similar negative results have been documented in the recent literature on social learning with misspecification, which highlights a distinct mechanism through which learning can fail. For example, \cite{bohren2016informational} shows that when agents have misspecified beliefs about the correlation in others' actions, a large degree of misspecification can harm  learning.\footnote{See \cite{bohren2021learning} for a more general treatment on different types of misspecifications.} In a broader framework, \cite*{frick2023belief} demonstrate that even mild misspecification can lead to extreme learning failure. Relatedly, in a repeated setting,  \cite{chen2024wisdom} identifies the phenomenon of group irrationality as the cause of belief nonconvergence. 

At the same time, this literature has also identified some forms of misspecification that can improve learning. With binary signals, \cite{bernardo2001evolution} demonstrate mostly numerically that the presence of occasionally overconfident agents has a positive effect on learning. Indeed, in a setting close to ours, \cite*{arieli2025hazards} show that within the same class of unbounded signals that we consider, a mild form of overconfidence can restore efficient learning in cases where learning would be inefficient under correct specification. By contrast, in our setting, agents are fully rational and strategically respond to incentives; consequently, the positive learning effects from misspecification do not apply.

Our insight that agents’ strategic incentives can undermine learning outcomes is not entirely new. Such an 
observation already appears in \cite{rosenberg2017efficiency}\textemdash a working paper version of \cite{rosenberg2019efficiency}\textemdash which circumvents strategic incentives by considering a planner who can dictate agents’ strategies.  They provide a sequence of cutoff strategies under which learning is efficient for a uniform signal distribution. 
Similarly, \cite*{smith2021informational} study a setting in which a planner can also dictate agents’ strategies, but instead aims to maximize discounted social welfare. With the same objective, \cite{martynov2020essays} studies the optimal pricing policy of a planner who can subsidize and tax goods in every period. Even for such planners, characterizing the social optimum is known to be technically challenging.\footnote{To the best of our knowledge, this remains an open question.} We therefore ask whether a social planner who respects agents’ strategic incentives can ever improve learning efficiency.

Our paper is also related to the literature on sequential social learning with different observational structures. 
With bounded signals, \cite*{acemoglu2011bayesian} provide a sufficient condition on a stochastic, independent network structure under which the probability that agents choose incorrectly converges to zero.\footnote{Relatedly, \cite*{lobel2015information} study learning in networks where the set of observed neighbors may be correlated across agents, highlighting how departures from independence affect learning outcomes.} \cite*{peres2020fragile} later characterize the optimal rate at which a stream of ``sacrificial lambs''\textemdash that is, agents who observe nothing except their private signals\textemdash should be injected to accelerate the rate at which the mistake probability converges to zero. More recently, \cite*{xu2025social} 
considers a setting in which agents arrive in cohorts and observe coarse signals of past cohorts’ actions, and provides necessary and sufficient conditions for learning in probability.\footnote{A similar cohort setup also appears in Section 2.6 of \cite*{rosenberg2019efficiency}.} 
\cite*{arieli2024positive} study a complementary problem in which a regulator can garble agents’ private signals, rather than the social information, to maximize the probability that agents herd on the correct action. 
For comprehensive surveys on recent developments in social learning,  see, e.g., \cite{golub2016learning,bikhchandani2024information}.

\section{Model}
\subsection{Baseline} Time is discrete, and the horizon is infinite, i.e., $t \in  \mathbb{N} = \{1, 2, \ldots\}$. There is a binary state $\theta \in \{h, \ell \}$ with a uniform prior. A sequence of agents indexed by time $t$ arrive in order, each acting once by choosing a binary action $a_t \in \{h, \ell\}$. Agents have a normalized base utility: a correct action yields a payoff of one and an incorrect action yields zero. Before acting, agent $t$ observes the actions of all predecessors, $H_t= (a_1, \ldots, a_{t-1})$. 

In addition, agent $t$ receives a private signal $s_t \in S$ where $S$ is a measurable set of signal realizations. Conditional on $\theta$, private signals are independent across agents. Let $q_t= \PP[\theta= h\mid s_t]$ be the agent's posterior belief conditional on the private signal. Since $q_t$ is a sufficient statistic for $\theta$ given $s_t$, we assume that $s_t = q_t$. Denote by $F_h$ and $F_\ell$ the CDFs of $q_t$ conditioned on $\theta= h$ and $\theta = \ell$, respectively. The unconditional CDF is thus $F= \frac{1}{2}(F_h + F_\ell)$. 
We assume that $F$ is symmetric about $1/2$, i.e., $F(q) + F(1-q) = 1$ for each $q$.
Equivalently, the pair of conditional CDFs $(F_h, F_\ell)$ is symmetric about $1/2$, i.e., $F_h(q) + F_\ell(1-q) = 1$.
We further assume that $F_h$ and $F_\ell$ are continuous, so that agents are almost surely not indifferent between actions. We use $\PP_h$ to denote $\PP[\cdot |\theta= h]$, the probability measure conditional on $\theta= h$, and use $\PP_\ell$ analogously.

We say that signals are unbounded if the support of $q_t$ contains $0$ and $1$ \citep{smith2000pathological}.
Following \cite{rosenberg2019efficiency} and \cite{arieli2025hazards}, we say that learning is \emph{efficient} if the expected number of incorrect choices is finite, 
\begin{equation} \label{eq:finite_sum}
    \sum_{t=1}^{\infty} \mathbb{P}[a_t \neq \theta] < \infty,
\end{equation}
and \emph{inefficient} otherwise. 
Efficient learning implies that the number of incorrect choices is finite.

In the above setting, \cite{rosenberg2019efficiency} show that learning is efficient if and only if the following condition for private signals holds: 
\begin{align} \label{eq:rosenberg}
    \int_0^1 \frac{1}{F(x)}\,dx < \infty.
\end{align}
Intuitively, efficient learning requires not only the existence of highly informative signals, but also that such signals occur with sufficient frequency so that incorrect herds can be overturned quickly.

For tractability, we impose a regularity condition on the signal structure that restricts only the tail behavior of the distribution. Specifically, the symmetric unconditional distribution $F$ is \emph{tail-regular} if there exists $\alpha >0$ such that  
\[
0 < \liminf_{q \to 0} \frac{F(q)}{q^\alpha} \leq \limsup_{q \to 0} \frac{F(q)}{q^\alpha} < \infty,
\]
in which case we write  $F(q) = \Theta(q^\alpha)$. In other words, $F(q)$ behaves like $q^\alpha$ near $q=0$. Hence, a larger value of $\alpha$ corresponds to a lower likelihood of receiving very informative signals. Clearly, the uniformly distributed private beliefs, i.e., $F(q)=q$, are tail-regular with $\alpha = 1$. On the other hand, Gaussian signals with state-dependent means are not tail-regular.

Note that all tail-regular signals are unbounded. Moreover, under tail-regular signals, the criterion for efficient learning is precisely captured by the single parameter $\alpha$:  a lower value of $\alpha$ corresponds to a higher likelihood of receiving very informative signals. It follows from \eqref{eq:rosenberg} that learning is efficient if $\alpha < 1$, and becomes inefficient once $\alpha \geq 1$.

\subsection{Planner's Intervention}
We now introduce a social planner who seeks to accelerate learning\textemdash that is, to achieve efficient learning by speeding up the convergence of the mistake probability to zero\textemdash through interventions in agents’ decisions. The planner observes the agents' actions, but not their private signals or the state $\theta$. That is, at each time $t$, he knows the entire past action history $H_t = (a_1, \dots, a_{t-1})$. The planner has two instruments at his disposal, information disclosure and transfers, which we describe in detail below.\\

\paragraph{\emph{Information Disclosure}}
Recall that in the baseline model, the social information available to each agent $t$ consists of the full past action history $H_t$. Under intervention, the planner instead determines what social information is disclosed to each agent. Formally, given a history $H_t$, the planner first forms the Bayesian belief $$\pi_t := \mathbb{P}[\theta = h \mid H_t]$$ about the state being high.  
Then, at each time $t\geq 1$, conditional on $\pi_t$, the planner selects\textemdash possibly at random\textemdash a signal to disclose to the agent, which induces a social belief $\nu_t$ for agent $t$. 
As is well-known, any distribution that is a mean-preserving contraction of the distribution of $\pi_t$ can be induced as the distribution of $\nu_t$, via some signal based on $\pi_t$. We interpret $\nu_t$ directly as the signal revealed to the agent, so that the information available to her before choosing an action is $(q_t,\nu_t)$\textemdash her private signal plus the social information revealed by the planner.\footnote{Note that it is without loss of generality to assume that the distribution of $\nu_t$ depends only on $\pi_t$, rather than also on $\nu_1, \ldots, \nu_{t-1}$. This is because any such policy induces a joint distribution of ($\pi_t, \nu_t$), which the planner can replicate without conditioning on past induced social beliefs, since the evolution of beliefs depends only on $(\pi_t, \nu_t)$.}

The disclosure mechanisms described above encompass a broad range of informational policies available to the planner. At one extreme, the planner may choose a \emph{full disclosure} policy, under which he fully reveals his belief so that $\nu_t = \pi_t$ for each $t$. This is equivalent to the baseline model where the entire history of past actions is revealed to the agent. At the other extreme, the planner may adopt a \emph{no disclosure} policy, under which the agent receives no social information\textemdash i.e., observes no past action history\textemdash so that $\nu_t = 1/2$. As an intermediate case, the planner may adopt a \emph{partial disclosure} policy, under which only the $k$ most recent actions are revealed. That is, 
\[
\nu_t = \PP[\theta = h \mid a_{t-k}, \dots, a_{t-1}],
\]
so that the agent observes only a truncated history. 
As another example, the planner may employ a \emph{stochastic disclosure} policy: with probability $\eps_t(\pi_t)$, he fully reveals his belief $\pi_t$ to the agent, and with the complementary probability $1-\eps_t(\pi_t)$, he discloses no information. The agents who receive no social information under such a policy are known as ``sacrificial lambs.'' \\

\paragraph{\emph{Transfers}}
Beyond shaping agents' social information, the planner can also influence behavior through monetary incentives. Specifically, the planner may offer transfers to each agent prior to her action. These transfers can thus depend on the agent’s action and on the information disclosure policy applied to the agent. 

Formally, a transfer scheme $\tau = (\tau_t)_t$ consists of a sequence of functions where each $\tau_t\colon [0,1]\to \mathbb R$. At time $t$, if agent $t$ takes the high action with induced social belief $\nu_t$, she receives additional transfer $\tau_t(\nu_t)$ that depends only on $\nu_t$. It is without loss of generality to restrict attention to transfer schemes that depend only on the current induced social belief, rather than on the full action history. This is because any history-dependent transfer scheme can be replicated by a planner using a transfer that conditions only on $\nu_t$, as an agent's incentive to take an action is determined by the pair $(\nu_t, \tau_t)$. Moreover, note that transfers do not depend on the planner's belief and therefore do not convey additional information about the state beyond what is already contained in the agent's induced social belief.\footnote{Allowing transfers to depend on the realized state does not affect agents’ incentives. Since agents are expected utility maximizers, any state-contingent transfer scheme induces, in expectation, the same incentives as a scheme that depends only on the agent’s action and induced social belief, and is therefore behaviorally equivalent.} \\

\subsection{Agents' Decision}
Given an information disclosure policy and a transfer scheme, the agent’s decision problem can be decomposed into two parts: belief formation and action choice. At time $t$, the agent first forms a posterior belief, $$p_t :=  \PP[\theta = h \mid \nu_t, q_t],$$ by combining her private belief $q_t$ with the induced social belief $\nu_t$. Conditional on this belief, the agent then chooses an action while taking into account the current transfer $\tau_t(\nu_t)$. Recall that the base payoff from taking the correct action is normalized to one. Thus, choosing the high action maximizes agent $t$'s expected payoff if  
\begin{equation} \label{eq:agent_decision}
   p_t \geq \frac12 (1-\tau_t(\nu_t)).
\end{equation}

Observe that when $\tau = 0$, this cutoff reduces to the case without transfers. From \eqref{eq:agent_decision}, we see that a positive transfer $\tau >0$ lowers the cutoff for taking the high action and thus subsidizes the high action, whereas a negative transfer $\tau <0$ raises the cutoff and subsidizes the low action. Throughout, we restrict attention to transfer schemes satisfying $|\tau| < 1$, since any transfer scheme outside this range would make one action strictly dominant regardless of the agent's belief.

\section{The First-Best Benchmark} \label{sec:benchmark}
As a benchmark, we briefly discuss the case in which the social planner can directly dictate agents’ strategies. 
 This is a stronger intervention regime than those considered above, as the planner no longer needs to account for agents’ strategic incentives. We ask whether, under this benchmark, the planner can restore efficiency in environments where learning is otherwise inefficient.

\cite{rosenberg2017efficiency} show that for some tail-regular signals, the answer is positive. In particular, under a uniform signal distribution, i.e., $F(p)=p$ for all $p\in[0,1]$, they construct a sequence of cutoff strategies that achieves efficient learning and show that these cutoffs differ significantly from those chosen by rational agents.\footnote{Proposition 6 of \cite{rosenberg2017efficiency} shows that with these cutoffs, the probability that agent $t$ chooses the wrong action converges to zero exponentially in $t$. Up to a change in the constant, this is the same rate of convergence as if all signals were public.} More generally, as shown in \cite{arieli2025hazards}, learning is efficient for all tail-regular signals if agents exhibit a mild form of misspecification\textemdash namely, mild condescension\textemdash under which they slightly underestimate the precision of others’ signals relative to their own, despite all signals having identical precision.\footnote{That is, agents correctly observe their own signals which are $\alpha$-tail-regular but incorrectly believe that all others' private signals are $\beta$-tail-regular. The condition for mild condescension requires $\beta \in (\alpha, \alpha+1)$.}
The resulting cutoff rules used by these behavioral agents therefore also deviate from those used by rational agents.
We note that a planner can replicate the behavior induced by such misspecification by instructing agents to follow these cutoff rules.
Consequently, for any tail-regular signal distribution, these results imply that a planner who can dictate agents’ strategies can always restore efficient learning.

\section{Results}
We now state our main result. In contrast to the benchmark, we show that once agents’ strategic incentives are taken into account, even a powerful planner, who can jointly design social information disclosure and transfers, cannot restore efficient learning with a finite budget. Recall that in the baseline model (under full disclosure and no transfers) learning is inefficient for tail-regular signals if and only if $\alpha \geq 1$. 

\begin{theorem}
\label{thm:coarse-feedback}
Suppose $\alpha \geq 1$. 
Then, for any information disclosure policy and any transfer scheme, either learning is inefficient or the expected sum of absolute transfers is infinite. 
\end{theorem}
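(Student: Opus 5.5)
The plan is to use the planner's belief $\pi_t$ as a potential and compare two quantities period by period. The first is the probability of a mistake at $t$. The second is the amount by which the planner's uncertainty $u_t:=\min(\pi_t,1-\pi_t)$ can shrink at $t$. Throughout I would use the tail regularity $F(q)=\Theta(q^\alpha)$. By Bayes' rule with symmetry, $dF_h(q)=2q\,dF(q)$ and $dF_\ell(q)=2(1-q)\,dF(q)$, so near $0$ we have $F_h(q)=\Theta(q^{\alpha+1})$ and $F_\ell(q)=\Theta(q^\alpha)$, and symmetrically near $1$.

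\emph{Step 1 (agents act by cutoffs).} Given $(\nu_t,\tau_t(\nu_t))$, agent $t$ takes $h$ iff $q_t\ge c_t$, where the cutoff $c_t$ is an explicit function of $\nu_t$ and $\tau_t$. Without transfers, $c_t/(1-c_t)=(1-\nu_t)/\nu_t$. A transfer shifts the odds of the cutoff by a factor that is bounded away from $0$ and $\infty$ unless $|\tau_t|$ is of order at least the relevant distance. The mistake probability given $(\nu_t,c_t)$ is
\[
\nu_t F_h(c_t)+(1-\nu_t)\bigl(1-F_\ell(c_t)\bigr).
\]
Take the case $\nu_t$ close to $1$, so that $c_t$ is small. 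Then the mistake probability is at least of order $\max\{(1-\nu_t)\,c_t^{\alpha},\,c_t^{\alpha+1}\}$.

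\emph{Step 2 (mistakes dominate planner uncertainty).} Since $\nu_t$ is a mean-preserving contraction of $\pi_t$, and since $\min(\cdot,1-\cdot)$ is concave, the agent cannot do better than an agent who knows $\pi_t$. I would derive $\PP[a_t\neq\theta]\ge \kappa\,\E{u_t}$ for a constant $\kappa>0$. This would use the full support of $F$ near $1/2$, so that with probability bounded below the private signal leaves the posterior within a constant factor of $u_t$. Transfers cannot lower this bound, because a transfer only moves the agent away from the expected-payoff-maximizing cutoff.

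\emph{Step 3 (learning speed, the key lemma).} Here I bound the expected one-step decrease of a concave potential $g(\pi_t)$, which I would take to be $g(\pi)=u$ or a power of $u$. Consider an action following cutoff $c$ when $\pi$ is near $1$. The likelihood ratio of the contrarian action $\ell$ is about $F_\ell(c)/F_h(c)\approx 1/c$, and that action occurs with probability about $c^\alpha$. The likelihood ratio of $h$ is only $1+O(c^\alpha)$. For the planner's update, what matters is the garbling relative to his own belief $\pi_t$, but the agent's cutoff is determined by $\nu_t$. A Jensen argument over the split of $\pi_t$ into the values of $\nu_t$ shows that disclosure cannot help. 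The cutoffs the planner would like, which are larger than $u_t$, are available only at the expense of either a larger mistake probability (Step 1) or a transfer $|\tau_t|\gtrsim c_t-(1-\nu_t)$. I aim for an inequality of the form
\[
\E{u_t-u_{t+1}}\;\le\; C\Bigl(\E{u_t^{1+\alpha}}+\E{u_t\,|\tau_t|}\Bigr)+C\,\PP[a_t\neq\theta]\cdot(\text{small}).
\]

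\emph{Step 4 (conclusion).} Suppose learning is efficient and $\sum_t\E{|\tau_t|}<\infty$. Then Steps 2 and 3 turn the mean sequence $x_t=\E{u_t}$ into a recursion $x_t-x_{t+1}\lesssim x_t^{1+\alpha}+\eps_t x_t$ with $\sum\eps_t<\infty$, after using Jensen or Hölder, since $u^{1+\alpha}$ is convex. Iterating gives $x_t\gtrsim t^{-1/\alpha}$, so $\sum_t x_t=\infty$ for $\alpha\ge1$. By Step 2 this contradicts efficiency.

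I expect the main obstacle to be Step 3: controlling the information the planner extracts uniformly over all disclosure policies. The agent's cutoff depends on $\nu_t$ rather than $\pi_t$, and the error terms from transfers and from coarse disclosure must be charged either to the mistake probability or to $|\tau_t|$ with constants that do not depend on $t$. My first attempt would be to condition on $\nu_t$ and use the martingale property $\E{\pi_t}{\nu_t}=\nu_t$. Then I would bound the conditional decrease of $u$ by a function of $(\nu_t,c_t)$ that is concave in $\nu_t$, so that the worst case is full disclosure.
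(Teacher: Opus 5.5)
\textbf{There is a genuine gap in the passage from Step 3 to Step 4, and it comes from your choice of potential.}

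Since $\EE[\pi_{t+1}\mid\pi_t,\nu_t]=\pi_t$, the expected decrease of $u_t=\min\{\pi_t,1-\pi_t\}$ is exactly the expected overshoot of $\pi_{t+1}$ across $1/2$. From $\pi_t$ near $1$, such a crossing needs a low action with $F_h(c)/F_\ell(c)\lesssim u_t$, i.e.\ $c\lesssim u_t$, and that has probability $O(u_t^{1+\alpha})$. So $\EE[u_t-u_{t+1}\mid\pi_t,\nu_t]\le Cu_t^{1+\alpha}$ holds for \emph{every} cutoff, whether rational, subsidized, or dictated. The agents' incentives never enter. Your Step 2 also holds for arbitrary cutoffs.

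Now take the first-best benchmark of Section~3. For $\alpha=1$, dictated cutoffs give efficient learning and zero transfers. So all your inputs (Steps 2--3 and efficiency) hold there, and yet $\sum_t x_t<\infty$. Hence the deduction in Step 4 cannot be valid. Concretely, two things fail:
\begin{itemize}
\item Jensen for the convex map $u\mapsto u^{1+\alpha}$ gives $\EE[u_t^{1+\alpha}]\ge x_t^{1+\alpha}$, which is the wrong direction. If $u_t\in\{0,1/2\}$, then $\EE[u_t^{1+\alpha}]=2^{-\alpha}x_t$. Under no disclosure, $x_t$ decays geometrically, which already contradicts $x_t\gtrsim t^{-1/\alpha}$.
\item Efficiency enters Step 3 only through the additive term $C\,\PP[a_t\neq\theta]\cdot(\text{small})$, which Step 4 silently drops. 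It cannot be rewritten as $\eps_t x_t$ with $\sum_t\eps_t<\infty$, because Step 2 gives $\PP[a_t\neq\theta]\ge\kappa x_t$, again the wrong way. A summable additive error yields no positive lower bound on $x_t$. The same objection applies to $\EE[u_t|\tau_t|]$.
\end{itemize}
Note also that the cutoffs you say the planner wants, those larger than $u_t$, never cause an immediate crossing. What they buy is accumulated information, and $u_t$ does not record it.

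The missing idea is a potential that accumulates information additively, namely $|\ell_t|=|\log(\pi_t/(1-\pi_t))|$. The incentive constraints bound its increment:
\[
\EE|\ell_{t+1}|-\EE|\ell_t|\le M\bigl(\PP[b_t\neq\theta]+\EE|\tau_t|\bigr).
\]
The reasons are as follows.
\begin{itemize}
\item The rare contrarian action at a small cutoff $\mu_t$ contributes about $|\log\mu_t|\,(\nu_t\mu_t^{\alpha+1}+(1-\nu_t)\mu_t^{\alpha})$, and incentive compatibility (Lemma~\ref{lem:cutoff_shift}) gives $1-\nu_t\le\mu_t+|\tau_t|$.
\item The frequent herd action contributes $O(\mu_t^\alpha)=O(\mu_t)$; this is exactly where $\alpha\ge1$ is used.
\item Finally, $\mu_t\lesssim\min\{\nu_t,1-\nu_t\}+|\tau_t|$, and middle-region cutoffs are charged in the same way.
\end{itemize}
Under efficiency and finite transfers these increments are summable, so $\sup_t\EE|\ell_t|=L<\infty$. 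Now Jensen goes the right way: $x\mapsto 1/(\ee^x+1)$ is convex on $[0,\infty)$, so $x_t=\EE[1/(\ee^{|\ell_t|}+1)]\ge 1/(\ee^L+1)$. Via Step 2 this contradicts efficiency, and no rate $t^{-1/\alpha}$ is needed.

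A minor point on Step 2: it should not rely on $F$ having mass near $1/2$, which is not assumed. Use instead that a private signal reinforcing the socially favored action occurs with probability at least $\eps_F=\min\{\PP_\ell[q_t\ge 1/2],\PP_h[q_t<1/2]\}>0$.
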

This result implies that efficient learning cannot be restored by any combination of disclosure policy and transfer scheme with finite expected transfers. In particular, a planner with a finite budget cannot restore efficiency. Compared to the first-best benchmark, Theorem~\ref{thm:coarse-feedback} thus highlights agents’ strategic behavior as an important source of inefficiency. An immediate consequence of Theorem~\ref{thm:coarse-feedback} is the following. 
\begin{corollary} \label{cor:only-feedback}
    Suppose $\alpha \geq 1$ and the planner imposes no transfers. Then, for any information disclosure policy, learning is inefficient.
\end{corollary}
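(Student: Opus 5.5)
The plan is to bound the welfare loss at each time from below by how much information the planner can have gathered so far. Suppose, for contradiction, that $\sum_t \PP[a_t\neq\theta]<\infty$ and $\sum_t \EE|\tau_t(\nu_t)|<\infty$. I will track the planner's public likelihood ratio through the log-odds $L_t=\log\frac{\pi_t}{1-\pi_t}$, and the agent's log-odds $\lambda_t=\log\frac{\nu_t}{1-\nu_t}$.

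\textbf{Step 1 (mistake probability from the social belief and transfer).} The agent takes the high action iff $q_t$ exceeds a cutoff $c_t=c(\nu_t,\tau_t)$. This cutoff satisfies $\frac{c_t}{1-c_t}\cdot\frac{\nu_t}{1-\nu_t}=\frac{1-\tau_t}{1+\tau_t}$. Using tail regularity and the standard identity $dF_\ell/dF_h=(1-q)/q$, I will show that conditional on $\nu_t$ the mistake probability is at least of order $\min(\nu_t,1-\nu_t)\cdot r_t^{\alpha}$ up to constants. Here $r_t$ is the distance of the effective belief (after shifting by the transfer) from certainty. Summing over $t$, finiteness of mistakes forces $\sum_t \EE[\min(\nu_t,1-\nu_t)]^{?}$-type quantities to be finite. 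Transfers can only move $c_t$ by a factor $(1\pm|\tau_t|)$, and since $\sum\EE|\tau_t|<\infty$, transfers are eventually small on average. So, except on a set of times contributing finitely, the effective cutoff is within a bounded factor of the no-transfer cutoff.

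\textbf{Step 2 (information gain per period).} From the planner's viewpoint, the expected increment of the log-likelihood ratio (the KL divergence between $\PP_h$ and $\PP_\ell$ of $a_t$ given $\pi_t,\nu_t$) is controlled by the probability that the agent acts against the social belief. If the agent is confident in $h$ with cutoff $c_t\approx 1-\nu_t$ small, the contrarian action occurs with probability $\Theta(c_t^\alpha)$. That action then carries log-likelihood of order $\log(1/c_t)$ at most, while the herd action carries only $O(c_t^\alpha)$. So the expected information per agent is $O(c_t^\alpha\log(1/c_t))$, and the mistake probability is $\Theta(c_t^{\alpha})\cdot c_t$ or $\Theta(\min(\pi_t,1-\pi_t))$, whichever is larger. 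The garbling $\pi_t\to\nu_t$ can only make things worse: mean-preservation gives $\EE[1-\nu_t\mid \theta=h]$ no better than what $\pi_t$ allows. Concretely, $\PP[a_t\ne\theta]\ge \EE\min(\pi_t,1-\pi_t)$ minus a correction, and the martingale/belief-distribution approach bounds $\EE\min(\pi_t,1-\pi_t)$ below in terms of the accumulated information $I_t=\sum_{s<t}\EE[\text{KL}_s]$. Roughly, $\EE\min(\pi_t,1-\pi_t)\gtrsim e^{-I_t}$ by a data-processing/Bretagnolle–Huber inequality.

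\textbf{Step 3 (closing the recursion).} Let $m_t=\PP[a_t\neq\theta]$. Steps 1–2 should give $m_t\gtrsim e^{-CI_t}$ together with $I_{t+1}-I_t\lesssim m_t^{\alpha/(\alpha+1)}\log(1/m_t)$ in the worst case, using convexity/Jensen over the distribution of $\nu_t$. For $\alpha\ge1$, an ODE comparison $x'=e^{-x\cdot(\ldots)}$ shows that $I_t$ grows at most like $\log t$ with coefficient small enough that $\sum_t m_t\gtrsim\sum 1/t=\infty$. This is the analogue of the Rosenberg–Vieille condition $\int 1/F=\infty$. The contradiction shows learning is inefficient. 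Large transfers could break the bound in Step 1, but they only occur on times whose total expected $|\tau_t|$ is finite, and Markov's inequality shows they affect only finitely much of the sum.

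The main obstacle is Step 2 done uniformly over arbitrary (random, garbled) disclosure. The exponents must match exactly at the threshold $\alpha=1$, where the logarithmic factor is critical. I would first try to work directly with the distribution of the agent's posterior $p_t$ and its mean-preserving property, rather than with trajectories. The aim is an inequality of the form $\EE[\text{info}_t]\le \phi(m_t)$ with $\int \phi^{-1}$-type divergence exactly when $\alpha\ge1$.
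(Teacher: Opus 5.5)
There is a genuine gap: the recursion in Step 3 cannot produce a contradiction. Taken at face value, $m_t \gtrsim \ee^{-CI_t}$ and $I_{t+1}-I_t \lesssim m_t^{\alpha/(\alpha+1)}\log(1/m_t)$ are jointly satisfied by $m_t = t^{-\gamma}$ for any $\gamma\in(1,1+1/\alpha)$. In that case $I_t$ can grow polynomially, $\ee^{-CI_t}$ is super-polynomially small, and yet $\sum_t m_t<\infty$. Nothing in these inequalities singles out $\alpha\ge 1$ either, since $\alpha/(\alpha+1)<1$ for every $\alpha$.

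The cause is that both per-period estimates are off by a power of the cutoff $c_t$.

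In Steps 1--2, your lower bound on the mistake probability is of order $c_t^{\alpha+1}$. That is, up to constants, the probability of a \emph{contrarian} action, not of a mistake. The dominant mistake is following a wrong herd, which has probability about $(1-\nu_t)(1-F_\ell(c_t))\approx c_t$. In general, conditional on $\nu_t$, the mistake probability is at least $\eps_F\min\{\nu_t,1-\nu_t\}$ with $\eps_F=\min\{\PP_\ell[q_t\ge 1/2],\PP_h[q_t<1/2]\}$. The reason is that an agent whose social belief favors the wrong action keeps it whenever her private belief points the same way (Lemma~\ref{lem:inaccuracy-bound}).

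In Step 2, the large update $|U_t^\ell|\approx\log(1/c_t)$ has probability $\EE[f(\pi_t,\nu_t)\mid\nu_t]=\nu_tF_h(c_t)+(1-\nu_t)F_\ell(c_t)=O(c_t^{\alpha+1})$, not $\Theta(c_t^\alpha)$. You must weight the low-state deviation probability by $1-\nu_t=c_t$, using $\EE[\pi_t\mid\nu_t]=\nu_t$. So this term contributes only $O(c_t^{\alpha+1}\log(1/c_t))=o(c_t)$, and the logarithm is not what is critical. The term that decides the threshold is the frequent herd action, whose update is $U_t^h=\Theta(c_t^\alpha)$; this is $O(c_t)$ exactly when $\alpha\ge 1$.

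With the correct estimates, the per-period information gain is \emph{linear} in the mistake probability (Lemma~\ref{lem:bounds_mistake}):
$\EE[\lvert\ell_{t+1}\rvert]-\EE[\lvert\ell_t\rvert]\le M\,\EE[\min\{\nu_t,1-\nu_t\}]\le (M/\eps_F)\,\PP[a_t\neq\theta]$.
No ODE comparison is needed; the argument closes as follows.
\begin{enumerate}
\item Efficient learning makes the right-hand side summable, so $L:=\sup_t\EE[\lvert\ell_t\rvert]<\infty$.
\item Jensen's inequality for the map $x\mapsto 1/(\ee^x+1)$, which is convex on $[0,\infty)$, gives $\EE[\min\{\pi_t,1-\pi_t\}]\ge 1/(\ee^L+1)$.
\item Since $\min$ is concave and $\EE[\pi_t\mid\nu_t]=\nu_t$, we get $\PP[a_t\neq\theta]\ge\eps_F\,\EE[\min\{\nu_t,1-\nu_t\}]\ge\eps_F\,\EE[\min\{\pi_t,1-\pi_t\}]$. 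This is a multiplicative comparison, not ``minus a correction''.
\item Hence the agents' mistake probabilities are bounded away from zero, a contradiction.
\end{enumerate}
Your KL/Bretagnolle--Huber bookkeeping could replace $\EE[\lvert\ell_t\rvert]$ and Jensen. It would only work after proving the analogous linear bound $\EE_h[\mathrm{KL}_t]\le C\,\PP[b_t\neq\theta]$, which rests on the same two observations. Finally, once Theorem~\ref{thm:coarse-feedback} is available the corollary is immediate, since zero transfers have finite expected total.
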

This result shows that, using information disclosure alone, the planner cannot fine-tune agents’ social information to overturn the incorrect herds quickly enough for learning to become efficient. Note that this result does not imply that full disclosure is always socially optimal. Indeed, limited disclosure can sometimes improve overall outcomes by delaying herding and facilitating additional information aggregation, at the cost of a higher mistake probability for some early agents.\footnote{For example, consider a signal structure with four signals:  for some small $\eps>0$, agents receive one of two perfectly revealing signals (one for each state), each occurring with probability $\eps/2$, and otherwise receive a noisy signal that induces a posterior of $2/3$ or $1/3$ with equal probability. Under full disclosure, all agents follow agent 1 unless a contradicting perfectly revealing signal arrives. If instead the planner withholds agent 1’s action from agent 2, the first two agents act solely on their private signals, thereby generating two conditionally independent observations rather than one. This reduces the probability that an incorrect herd forms from $1/3$ to $7/27$, and consequently lowers the expected number of mistakes. Although this signal distribution is not tail-regular, it can be approximated by a tail-regular distribution, for which the same conclusion holds.} Nevertheless, our result shows that such overall improvements are limited in that they cannot restore efficient learning.

The idea behind Theorem~\ref{thm:coarse-feedback} is as follows. Because agents are strategic, a planner who aims to accelerate learning for all agents faces a fundamental tradeoff in designing both information disclosure and transfers. On the one hand, the planner must provide the current agent with sufficiently precise social information to keep her mistake probability low. On the other hand, he must extract enough information from the agent’s action to generate more precise information for future agents. A similar tension arises with transfers. Intuitively, to accelerate learning, the planner should subsidize contrarian actions so that an incorrect herd can end quickly. However, once such actions are subsidized, future agents who observe the herd action being taken despite these subsidies will infer more from it, strengthening their incentive to follow the herd. As a result, the planner must offer even larger transfers to future agents to offset this additional inference.  

These tradeoffs arise precisely because rational agents optimally respond to social information and incentives. When an agent receives very precise social information, she tends to ignore her private signal unless it is extremely strong, which occurs with small probability. As a consequence, the more precise the social information is, the less informative the agent’s action becomes, undermining the planner’s ability to learn. Conversely, when the planner provides imprecise social information, the agent relies more heavily on her private signal, allowing the planner to learn from her action, but at the cost of a higher mistake probability. Similarly, the planner could use transfers to further elicit more information from agents' actions, but doing so would require a large amount of transfers.

In sum, we show that the aforementioned tensions are insurmountable: any intervention\textemdash jointly designing information disclosure and transfers\textemdash that provides agents with the appropriate incentives to achieve fast learning, necessarily either leaves the planner with too little information to sustain precise social beliefs or requires consistently large transfers whose expected total amount is infinite.

\section{Analysis}
In this section, we provide a detailed analysis of the dynamics of the planner’s belief and its relationship to agents’ mistake probabilities. This will lead
to a proof sketch for Theorem~\ref{thm:coarse-feedback} at the end of the section.

\subsection{Planner's Belief Dynamics}
Recall that the planner's belief that the state is high at time $t$ is $\pi_t = \PP[\theta = h \mid a_1, \ldots, a_{t-1}]$, and that agent $t$'s private belief is $q_t = \PP[\theta = h \mid s_t]$. Starting from $\pi_t$, the planner chooses an information disclosure policy that induces a social belief $\nu_t$ for agent $t$ and offers a transfer $
\tau_t(\nu_t)$. After the planner’s intervention, agent $t$ chooses her action based on her posterior belief
\[
p_t = \PP[\theta = h \mid q_t, \nu_t].
\]
From \eqref{eq:agent_decision}, agent $t$ chooses the high action if and only if
\[
p_t \ge \tfrac{1}{2}(1 - \tau_t (\nu_t)).
\]
By Bayes’ rule, this condition can be written as
\[
\frac{p_t}{1-p_t} = \frac{q_t}{1-q_t} \times \frac{\nu_t}{1-\nu_t}  \geq \frac{1-\tau_t (\nu_t)}{1+\tau_t (\nu_t)}.
\]
We define agent $t$'s private-belief cutoff as 
\begin{equation} \label{eq:transfer_cut}
c(\nu_t, \tau_t (\nu_t)) 
= \frac{(1-\nu_t)(1- \tau_t (\nu_t))}{\nu_t(1+  \tau_t(\nu_t)) + (1-\nu_t)(1- \tau_t (\nu_t))}.
\end{equation} 
It follows that agent $t$ chooses the high action if  her private belief satisfies 
$$q_t \ge c_t(\nu_t, \tau_t(\nu_t)),$$ and chooses the low action otherwise.
Since the transfer $\tau_t$ is determined by $\nu_t$,  for brevity, we write the cutoff in \eqref{eq:transfer_cut} as $c_t(\nu_t) = c(\nu_t, \tau_t(\nu_t))$.  
Therefore, conditional on $\theta$, the probability that agent $t$ chooses the high action is $1-F_\theta(c_t(\nu_t))$, while the probability that agent $t$ chooses the low action is $F_\theta(c_t(\nu_t))$. 
This implies that if agent $t$ chooses the high action, then the planner's belief $\pi_{t+1}$ evolves as 
\begin{align}  \label{eq:high_action}
     \frac{\pi_{t+1}}{1-\pi_{t+1}} &= \frac{\pi_t}{1-\pi_t} \times \frac{1-F_h(c_t(\nu_t))}{1-F_\ell(c_t(\nu_t))}.
     \end{align}
     If agent $t$ chooses the low action, then $\pi_{t+1}$ evolves as  
          \begin{align} \label{eq:low_action}
     \frac{\pi_{t+1}}{1-\pi_{t+1}}&  =  \frac{\pi_t}{1-\pi_t} \times \frac{F_h(c_t(\nu_t) )}{F_\ell(c_t(\nu_t) )}.
\end{align}
The second terms in the above products capture the information conveyed by agent $t$'s action to the planner's belief. Since $F_h$ first-order stochastically dominates $F_\ell$, the likelihood ratio in \eqref{eq:high_action} is always greater than one. Consequently, when the planner observes a high action, his belief that the state is high increases. Likewise, the likelihood ratio in \eqref{eq:low_action} is always less than one, so observing a low action decreases the planner’s belief that the state is high.

\subsection{Agent's Mistake Probability}
Note that from \eqref{eq:transfer_cut}, we have $c(\nu_t, 0) = 1-\nu_t$, so $1-\nu_t$ is the private-belief cutoff used by agent $t$ with induced social belief $\nu_t$ under zero transfers. We first observe that any transfer can shift an agent's private-belief cutoff by at most its magnitude.
\begin{lemma} \label{lem:cutoff_shift}
  For any $t$, any induced social belief $\nu_t$ and the associated transfer $\tau_t(\nu_t)$, the private-belief cutoff $c(\nu_t, \tau_t(\nu_t))$ satisfies:
   \[
       |c(\nu_t, \tau_t(\nu_t)) - (1-\nu_t)| \leq |\tau_t(\nu_t)|.
    \]
\end{lemma}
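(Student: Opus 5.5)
Throughout we write $\nu=\nu_t$ and $\tau=\tau_t(\nu_t)$ with $|\tau|<1$, and the plan is to compute $c(\nu,\tau)-(1-\nu)$ in closed form and bound it directly. Write $D = \nu(1+\tau)+(1-\nu)(1-\tau)$. Expanding gives $D = 1 + \tau(2\nu-1)$, which is positive because $|\tau|<1$ and $|2\nu-1|\le 1$. The first step is to put the difference over the common denominator:
\[
c(\nu,\tau)-(1-\nu) = \frac{(1-\nu)(1-\tau) - (1-\nu)D}{D} = \frac{(1-\nu)\bigl(1-\tau - 1 - \tau(2\nu-1)\bigr)}{D} = \frac{-2\tau\nu(1-\nu)}{1+\tau(2\nu-1)}.
\]
It therefore suffices to prove
\[
\frac{2\nu(1-\nu)}{1+\tau(2\nu-1)}\le 1 \quad\text{for all } \nu\in[0,1],\ |\tau|<1,
\]
and multiplying this by $|\tau|$ gives the lemma.

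Since $D>0$, the inequality is equivalent to $2\nu(1-\nu)\le 1+\tau(2\nu-1)$. The right-hand side is affine in $\tau$, so its minimum over $|\tau|\le 1$ is attained at an endpoint:
\[
1+\tau(2\nu-1) \;\ge\; 1-|2\nu-1| \;=\; 2\min(\nu,1-\nu).
\]
We then need $2\nu(1-\nu)\le 2\min(\nu,1-\nu)$. This holds because $\nu(1-\nu)=\min(\nu,1-\nu)\cdot\max(\nu,1-\nu)$ and $\max(\nu,1-\nu)\le 1$.

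The only delicate step is the algebra simplifying $D$ and the numerator. The rest is elementary, and no earlier result of the paper is needed. The degenerate cases $\nu\in\{0,1\}$ cause no trouble: there $D=1\mp\tau>0$, the numerator vanishes, and the difference is $0$.
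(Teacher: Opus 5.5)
Your proof is correct and follows the paper's argument: you derive the same closed form $c(\nu,\tau)-(1-\nu) = -2\tau\nu(1-\nu)/(1+\tau(2\nu-1))$ and make the same reduction to $2\nu(1-\nu)\le 1+\tau(2\nu-1)$. Your endpoint argument, $1+\tau(2\nu-1)\ge 2\min(\nu,1-\nu)\ge 2\nu(1-\nu)$, also justifies this last inequality, which the paper only asserts.
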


This lemma suggests that the planner cannot substantially shift agents’ cutoffs using small monetary incentives. Next, for a given induced social belief and transfer, we compare the mistake probabilities of an agent before and after receiving her private signal. Intuitively, the former would perform worse than the latter since the latter has more information. Nevertheless, we show that the private signal improves the mistake probability by at most a constant factor, which depends only on the private signal distribution. We denote by $b_t$ the action chosen by agent $t$ prior to receiving her private signal.

\begin{lemma}\label{lem:inaccuracy-bound}
   There exists a constant $\eps_F > 0$ such that for any time $t$ and any  induced social belief $\nu_t$, 
   \[\PP[a_t \neq \theta \mid \nu_t] \geq \eps_F \cdot \PP[b_t \neq \theta \mid \nu_t].\]
\end{lemma}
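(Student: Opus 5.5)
The plan is to compare $a_t$ and $b_t$ directly through the private-belief cutoff. I take $b_t$ to be the payoff-maximizing action of agent $t$ given only $\nu_t$ and the offered transfer $\tau_t(\nu_t)$, i.e. before her private signal arrives, with ties broken toward $h$ as in \eqref{eq:agent_decision}. The statement does not spell this reading out, so I fix it here. I will show that the constant $\eps_F = F_h(1/2)$ works.

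\textbf{Step 1: $b_t$ in terms of the cutoff.} Write $c = c_t(\nu_t)$ and $\tau=\tau_t(\nu_t)$. By \eqref{eq:agent_decision} with $p_t$ replaced by $\nu_t$, $b_t = h$ iff $\nu_t \ge \tfrac12(1-\tau)$. From \eqref{eq:transfer_cut}, $c\le 1/2$ holds iff
\[
(1-\nu_t)(1-\tau)\le \nu_t(1+\tau),
\]
which is equivalent to $2\nu_t \ge 1-\tau$. Hence $b_t=h$ exactly when $c\le 1/2$, and $b_t=\ell$ exactly when $c>1/2$. Given $\nu_t$, the mistake probability of $b_t$ is therefore $1-\nu_t$ in the first case and $\nu_t$ in the second.

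\textbf{Step 2: the mistake probability of $a_t$.} Conditional on $\nu_t$, the private signal $q_t$ is independent of $\nu_t$ given $\theta$, and $\PP[\theta=h\mid\nu_t]=\nu_t$. Therefore
\[
\PP[a_t\ne\theta\mid\nu_t] = \nu_t F_h(c) + (1-\nu_t)\bigl(1-F_\ell(c)\bigr).
\]
If $c\le 1/2$, I keep only the second term and use monotonicity of $F_\ell$. Symmetry gives $1-F_\ell(1/2)=F_h(1/2)$, so this is at least $(1-\nu_t)F_h(1/2)$. If $c>1/2$, I keep only the first term, which is at least $\nu_t F_h(1/2)$. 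In both cases, Step 1 gives $\PP[a_t\ne\theta\mid\nu_t]\ge F_h(1/2)\,\PP[b_t\ne\theta\mid\nu_t]$.

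\textbf{Step 3: positivity of $F_h(1/2)$.} It remains to check $\eps_F = F_h(1/2) > 0$, which is the only step needing any care. Since $q_t$ is the posterior given the signal and the prior is uniform, $dF_h = 2q\,dF$. Hence
\[
F_h(1/2) = \int_0^{1/2} 2q\,dF(q).
\]
By symmetry and continuity, $F(1/2)=1/2$, so $F$ puts mass $1/2$ on $[0,1/2]$. That mass cannot all sit at $q=0$, because $F$ is continuous. So the integral is strictly positive.

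The constant depends only on $F$, as the statement requires. No tail-regularity is used; only symmetry and continuity of the signal distribution enter.
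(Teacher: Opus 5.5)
Your proposal is correct and follows essentially the same route as the paper: split on whether the signal-free action $b_t$ is $h$ or $\ell$ (equivalently $c_t(\nu_t)\le 1/2$ or $c_t(\nu_t)>1/2$), then bound the informed agent's mistake probability from below by the probability that her private signal lands on the wrong side of $1/2$. Your constant $F_h(1/2)$ equals the paper's $\eps_F=\min\{\PP_\ell[q_t\ge 1/2],\PP_h[q_t<1/2]\}$ by symmetry and continuity. The only differences are that you write the conditional mistake probability as the explicit mixture $\nu_t F_h(c)+(1-\nu_t)(1-F_\ell(c))$ instead of arguing state by state, and that your Step 3 (using $dF_h=2q\,dF$ and the absence of an atom at $0$) justifies $\eps_F>0$, which the paper only asserts from continuity.
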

The proof of Lemma~\ref{lem:inaccuracy-bound} uses a simple idea:  if an agent who relies only on social information chooses the incorrect action, she will continue to do so after receiving any private signal that reinforces the incorrect action, and those signals occur with constant probability. We will use this result to relate agents’ mistake probabilities to the rate at which the planner’s belief can grow.

\subsection{Expected Belief Growth}
To analyze the growth of the planner’s belief, we examine the expected incremental changes in its log-likelihood ratio. 
As the planner observes more actions from agents, his belief $\pi_t$ converges almost surely, since it is a bounded martingale. In other words, the planner becomes more certain about the state. We denote the corresponding log-likelihood ratio at time $t$ by $$\ell_t := \log\frac{\pi_t}{1-\pi_t}.$$

Taking logarithms on both sides of \eqref{eq:high_action} and \eqref{eq:low_action}, the evolution of the planner’s log-likelihood ratio $\ell_t$ can now be written as 
\begin{equation} \label{eq:LLR}
    \ell_{t+1} = \ell_t + U_t^a(\nu_t) \quad \text{if~} a_t = a \in \{h, \ell\},
\end{equation}
where we define 
\[
U_t^h(\nu) := \log \frac{1-F_h(c_t(\nu))}{1-F_\ell(c_t(\nu))}, \quad\text{and}\quad U_t^\ell(\nu) := \log \frac{F_h(c_t(\nu))}{F_\ell(c_t(\nu))}.
\]

When $F_h$ and $F_\ell$ are tail-regular with exponent $\alpha$, we can approximate $F_h(q)$ and $F_\ell(q)$ for small $q$ by $q^{\alpha+1}$ and $q^\alpha$, respectively, up to multiplicative constants (see Lemma~\ref{lem:tail_reg} in the appendix). As a result, $F_h(q)$ is much smaller than $F_\ell(q)$ when $q$ is close to zero. Thus, receiving a private signal below a small threshold $q$ is much less likely in the high state than in the low state, and so agents are much more likely to take the low action in the low state than in the high state. 
In consequence, when agents hold a high induced social belief, their private-belief cutoffs fall in this region. In that case, observing an agent take the low action is highly informative for the planner, since it is largely unexpected, whereas observing the high action conveys relatively little information.

More precisely, suppose agents have a large induced social belief $\nu$, so that their private-belief cutoff $c_t(\nu)$ is close to zero. We show in Lemma~\ref{lem:approx_llr} in the appendix that, for large $\nu$, the log-likelihood ratio increments $U^h_t(\nu)$ and $U^\ell_t(\nu)$ can be well approximated by
\[
U^h_t(\nu) \approx (c_t(\nu))^\alpha \quad \text{and} \quad  U^\ell_t(\nu) \approx \log (c_t(\nu)).
\]
Hence, when $\nu$ is large, taking the high action is the likely outcome, so observing it increases the planner’s belief by a small amount of $(c_t(\nu))^\alpha$. In contrast, taking the low action is unlikely, and observing it decreases the planner’s belief by a large amount, i.e., $\log (c_t(\nu))$. This asymmetry in the planner’s belief updating is balanced by the probabilities of these events, implying that belief updates are typically small, with large updates occurring only rarely. As a result, although the planner’s belief converges almost surely, its expected growth over time is gradual.

To quantify this gradual learning, we measure the strength of the planner’s belief by the absolute log-likelihood ratio $|\ell_t|$.  In the following lemma, we establish an upper bound on the incremental change in its expected value, which is crucial for proving our main theorem. Recall that $b_t$ is the action that agent $t$ would take without receiving her private signal\textemdash that is, the action favored by social information alone, given the transfer. 
\begin{lemma}\label{lem:bounds_mistake}
For any $\alpha \geq 1$, there exists some constant $M < \infty$ such that for any time $t$, 
    \[
0 \leq    \EE[\lvert \ell_{t+1}\rvert - \lvert \ell_{t}\rvert]  \leq M \cdot (\PP[b_t \neq \theta] + \EE[|\tau_t|]). 
    \]
\end{lemma}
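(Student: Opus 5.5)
The lower bound follows from convexity. The upper bound comes from conditioning on the disclosed belief $\nu_t$ and computing the one-step drift of $\ell_t$ in terms of Kullback--Leibler divergences between the action distributions in the two states; the assumption $\alpha\ge 1$ enters when the divergence in the likely direction is bounded by the cutoff $c_t(\nu_t)$.

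\emph{Lower bound.} The map $g(\pi)=\lvert\log\frac{\pi}{1-\pi}\rvert$ is convex on $(0,1)$. On $[1/2,1)$ the logit is convex. On $(0,1/2]$ the negative logit is convex. At $1/2$ the one-sided slopes are $-4$ and $+4$, so the kink is convex. Since $(\pi_t)$ is a martingale with respect to the planner's information, Jensen's inequality gives $\EE[\lvert\ell_{t+1}\rvert\mid H_t]\ge\lvert\ell_t\rvert$, and taking expectations gives the lower bound.

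\emph{Upper bound: one-step drift.} I condition on $(\pi_t,\nu_t)$ and write $c=c_t(\nu_t)$. By \eqref{eq:LLR}, the increment $U$ of $\ell_t$ is a function of $a_t$, and the distribution of $a_t$ given $\theta$ is Bernoulli with parameter $1-F_\theta(c)$. Hence
\[\EE[U\mid\pi_t,\nu_t]=\pi_t D_h(c)-(1-\pi_t)D_\ell(c),\]
where $D_h(c)=\mathrm{KL}(\mathrm{Ber}(F_h(c))\,\|\,\mathrm{Ber}(F_\ell(c)))$ and $D_\ell(c)$ is the reverse divergence. Assume without loss of generality that $b_t=h$, so $c\le 1/2$ (this is what $b_t=h$ means in cutoff terms), and that $\nu_t\ge 1/2$ up to the transfer correction. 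The case $\nu_t<1/2$ while $b_t=h$ is covered by the transfer term, via Lemma~\ref{lem:cutoff_shift}.

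Using Lemma~\ref{lem:tail_reg}, with $F_h(c)\asymp c^{\alpha+1}$ and $F_\ell(c)\asymp c^{\alpha}$, I will show the following two estimates:
\begin{itemize}
\item[(i)] $D_h(c)\lesssim F_\ell(c)+F_h(c)\log\frac{F_h(c)}{F_\ell(c)}\lesssim c^{\alpha}\le c$, where the last step uses $\alpha\ge 1$ and $c\le 1$.
\item[(ii)] $D_\ell(c)\lesssim c^\alpha\log(1/c)$, which is bounded by a constant $K$.
\end{itemize}
By Lemma~\ref{lem:cutoff_shift}, $c\le (1-\nu_t)+\lvert\tau_t\rvert$. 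Moreover, $1-\nu_t=\PP[\theta=\ell\mid\nu_t]=\PP[b_t\ne\theta\mid\nu_t]$ when $b_t=h$ and there is no transfer; with a transfer, the discrepancy is again absorbed by $\lvert\tau_t\rvert$. For $c$ bounded away from $0$, all quantities are bounded and $\PP[b_t\ne\theta\mid\nu_t]$ is bounded below, so the estimate holds trivially with a large $M$.

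\emph{From the signed drift to $\lvert\ell\rvert$.} I split according to the sign of $\ell_t$.
\begin{itemize}
\item On $\{\ell_t\ge0\}$, where the update is in the likely direction, I use $\lvert\ell_t+U\rvert-\lvert\ell_t\rvert\le U$, whose conditional mean is at most $D_h(c)\lesssim c$.
\item On $\{\ell_t<0\}$, I use $\lvert\ell_t+U\rvert-\lvert\ell_t\rvert\le -U$, whose conditional mean is at most $(1-\pi_t)D_\ell\le K$. Since $\pi_t<1/2$ on this event, $\PP[\ell_t<0\mid\nu_t]\le 2\,\PP[\theta=\ell\mid\nu_t]=2(1-\nu_t)$. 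So this part contributes at most $2K(1-\nu_t)$.
\end{itemize}
Averaging over $\nu_t$ then gives $M(\PP[b_t\ne\theta]+\EE\lvert\tau_t\rvert)$.

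\emph{Main obstacle.} The inequality $\lvert\ell+U\rvert-\lvert\ell\rvert\le \mathrm{sign}(\ell)\,U$ fails when $U$ flips the sign of $\ell$. The correct inequality is $\lvert\ell+U\rvert-\lvert\ell\rvert\le \mathrm{sign}(\ell)\,U+2(\lvert U\rvert-\lvert\ell\rvert)^+$. On $\{\ell_t\ge0\}$, a flip requires the rare low action, whose update is $\lvert U\rvert\approx\log(1/c)$ and which occurs with probability at most $F_\ell(c)\asymp c^\alpha$. The correction is therefore $\lesssim c^\alpha\log(1/c)$, which is not bounded by $c$ when $\alpha=1$.

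My first attempt is to note that $\EE[\lvert\ell_{t+1}\rvert\mid\cdot\,]$ is bounded by the mixture directly. I would compute $\EE[\lvert\ell_t+U\rvert]$ exactly as a two-point expectation under the posterior $\pi_t$. I would then use that the low-action probability under the planner's measure is $\pi_tF_h+(1-\pi_t)F_\ell$. The $F_\ell$ part carries weight $(1-\pi_t)$, and $\EE[(1-\pi_t)\mid\nu_t]$ equals $1-\nu_t$ times bounded factors, so this contribution can be charged to $\PP[b_t\ne\theta]$ rather than to $c$.
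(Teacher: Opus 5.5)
Your argument works once the repair in your last paragraph becomes the main step. That repair is exactly the paper's key estimate; only the scaffolding around it differs.

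The paper never computes a signed drift or splits on the sign of $\ell_t$. It uses $\lvert\ell_t+U\rvert-\lvert\ell_t\rvert\le\lvert U\rvert$, so that $\Delta(\pi_t,\nu_t)\le f(\pi_t,\nu_t)\lvert U_t^\ell(\nu_t)\rvert+(1-f(\pi_t,\nu_t))\lvert U_t^h(\nu_t)\rvert$. It then conditions on $\nu_t$ via $\EE[f(\pi_t,\nu_t)\mid\nu_t]=\nu_tF_h(c)+(1-\nu_t)F_\ell(c)$, which gives three pieces:
\begin{itemize}
\item the $F_h$ part contributes $c^{\alpha+1}\lvert\log c\rvert\lesssim c$;
\item the $F_\ell$ part contributes $(1-\nu_t)\,c^{\alpha}\lvert\log c\rvert\lesssim 1-\nu_t$;
\item the high action contributes $\lvert U_t^h\rvert\lesssim c^{\alpha}\le c$, which is where $\alpha\ge 1$ enters, just as in your estimate (i).
\end{itemize}
This bound is blind to sign flips, so the obstacle you identify never arises. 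Your KL divergences, the event split and the correction $2(\lvert U\rvert-\lvert\ell\rvert)^+$ therefore buy nothing. Indeed $D_h(c)=F_h(c)U_t^\ell+(1-F_h(c))U_t^h\le\lvert U_t^h\rvert$, so the signed drift is no sharper than the triangle inequality. The only helpful term, $-(1-\pi_t)D_\ell(c)$, gets discarded anyway. Your bound by $-U$ on $\{\ell_t<0\}$ also fails on flips, but harmlessly: $\EE[\lvert U\rvert\mid\pi_t,\nu_t]$ is uniformly bounded and $\PP[\ell_t<0\mid\nu_t]\le 2(1-\nu_t)$.

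One thing your route does streamline. If $b_t=h$, then $\PP[b_t\ne\theta\mid\nu_t]=1-\nu_t$ with or without transfers. So you can bypass the paper's comparison with the zero-transfer action $b^0_t$, and your qualifier ``and there is no transfer'' can be dropped.

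One small correction concerns $c$ bounded away from $0$ and $1$. There it is not $\PP[b_t\ne\theta\mid\nu_t]$ that is bounded below, since a large transfer allows $\nu_t$ near $1$ with $c\in(\eps,1/2]$. What is bounded below is $\PP[b_t\ne\theta\mid\nu_t]+\lvert\tau_t\rvert=(1-\nu_t)+\lvert\tau_t\rvert\ge c>\eps$, by Lemma~\ref{lem:cutoff_shift}. The paper makes the same move when it bounds the indicator of $\mu_t\in(\eps,1-\eps)$ by $\frac{2}{\eps}(\min\{\nu_t,1-\nu_t\}+\lvert\tau_t\rvert)$.
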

This result shows that, although the strength of the planner’s belief increases on average, learning cannot proceed too quickly: at any given time, the per-period growth of the planner’s belief is bounded above, up to a constant, by the agent’s mistake probability prior to receiving her private signal and the expected transfer. 

The proof of Lemma~\ref{lem:bounds_mistake} relies on the observation that, on average, the planner’s belief evolves through a combination of frequent small updates and rare large updates. Intuitively, the planner learns the most about the state when agents act against the action favored by their induced social beliefs. Such deviations arise either because the existing social information points in the wrong direction, or because transfers shift the agent’s private-belief cutoff away from that implied by social information. The former events are rare since signals are tail-regular with $\alpha \geq 1$, while the latter cutoff shifts are bounded by the size of the transfer (Lemma~\ref{lem:cutoff_shift}). Thus, their expected contribution to the planner’s belief growth is bounded by the mistake probability under social information and the expected transfer.

\subsection{Proof Sketch of Theorem~\ref{thm:coarse-feedback}}
We end this section by providing a proof sketch of Theorem~\ref{thm:coarse-feedback}.
Fix a tail-regular signal distribution with $\alpha \geq 1$. Suppose, toward a contradiction, that there exist an information disclosure policy
and a transfer scheme that achieve efficient learning with a finite expected total transfer. Then the sum of agents’ mistake probabilities, as well as the expected total transfers, must be finite. By Lemma~\ref{lem:inaccuracy-bound}, the planner cannot do much worse than the agents, and therefore the sum of the planner’s mistake probabilities is also
finite. Meanwhile, Lemma~\ref{lem:bounds_mistake} implies that, for $\alpha \geq 1$, the expected strength of the planner’s belief is uniformly bounded over time. This, in turn, implies that the planner’s mistake probability is bounded away from zero. Since agents cannot outperform the planner by more than a constant factor, their mistake probabilities are likewise bounded away from zero, a contradiction to efficient
learning. Therefore, we conclude that no information disclosure policy and transfer scheme can restore efficient learning with a finite expected total transfer.

\section{Conclusions}
In this paper, we show that a social planner cannot improve the efficiency of learning with a finite budget, despite having flexible information design choices. This stands in contrast to a benchmark setting in which efficient learning can always be achieved when the planner is able to disregard agents’ incentives and directly dictate their strategies. Our result thus underscores that inefficiency in social learning is not only a consequence of informational constraints or coarseness of observed actions, but also arises fundamentally from agents’ strategic behavior. When agents optimally respond to both social information and incentives, we show that any attempt to accelerate learning either deprives the planner of sufficiently informative actions to sustain precise social beliefs over time or requires an infinite amount of expected transfers. 

One promising direction for future research is to study the optimal design of policy interventions in environments where efficient learning is feasible. In such settings, different disclosure policies may all lead to eventual learning, yet differ substantially in the rate at which learning occurs. Characterizing the disclosure policy that minimizes the total expected number of mistakes is an open question. 
More broadly, the limits of information disclosure as a tool for improving learning outcomes are not yet well understood. For example, it remains open whether full disclosure can always be improved upon, or how to characterize the class of disclosure policies that are optimal for a given private signal distribution.

Our model can be extended in several directions. It is natural to consider an alternative notion of efficient learning, based on the discounted expected number of incorrect choices $\sum_{t\ge 1}\delta^t\PP[a_t\neq \theta]$ for $\delta\in(0,1)$. This criterion is always finite and therefore allows for a finer comparison across interventions. We conjecture that a conclusion similar to our main result continues to hold under this discounted notion of efficient learning, as discounting seems to favor the full disclosure policy.\footnote{Intuitively, relative to full disclosure, coarser disclosure policies shift accuracy from early to later agents: withholding information from early agents makes their actions more informative, allowing more precise information to be revealed later on. This tends to reduce incorrect choices in later periods, but at the expense of more mistakes early on. Because discounting places greater weight on earlier mistakes, the discounted criterion should shift the comparison in favor of the full disclosure policy.} 
Another direction is to expand the set of tools available to the designer.
For example, consider a designer who can directly compensate agents for revealing their private signals (or, more generally, garblings of their signals) rather than inferring the signals from actions.
Such mechanisms would allow the planner to extract more information without increasing the probability of incorrect choices, thus alleviating a core tension in our model.
A related positive effect of garbling agents’ private signals has been shown in settings with bounded signals \citep{arieli2024positive}.
Finally, an interesting technical question is whether our results can be extended beyond tail-regular signals. We leave these questions for future research.

\appendix
\setcounter{equation}{0}

 \section{Omitted Proofs from the Main text}

\begin{proof}[Proof of Lemma~\ref{lem:cutoff_shift}]
 For ease of notation, we write $\tau_t = \tau_t(\nu_t)$, and so, agent $t$'s private-belief cutoff is $c(\nu_t, \tau_t)$. 
 A straightforward calculation shows that
\begin{align*} 
c(\nu_t, \tau_t) - (1-\nu_t) 
&= \frac{(1-\nu_t) - \tau_t(1-\nu_t) - (1-\nu_t)[1 - \tau_t(1-2\nu_t)]}{1 - \tau_t(1-2\nu_t)} \\
&= \frac{(1-\nu_t) \left[ 1 - \tau_t - 1 + \tau_t - 2\tau_t\nu_t \right]}{1 - \tau_t(1-2\nu_t)} \\
&= -\tau_t \cdot \frac{2\nu_t(1-\nu_t)}{1 - \tau_t(1-2\nu_t)}. 
\end{align*}

Since $\nu_t \in [0,1]$ and $|\tau_t| < 1$, the denominator $1 - \tau_t(1-2\nu_t)$ is always positive. Thus, it suffices to show that $2\nu_t(1-\nu_t) \leq 1 - \tau_t(1-2\nu_t)$, which holds for any $|\tau_t|<1$.
\end{proof}

 \begin{proof}[Proof of Lemma~\ref{lem:inaccuracy-bound}]
Recall that $q_t$ is agent $t$'s belief conditional on the private signal, and we use $\PP_h$ and $\PP_\ell$ to denote the conditional probabilities. Let  $$\eps_F = \min \{ \PP_\ell[q_t \ge 1/2], \PP_h[q_t < 1/2]\}.$$ 
Since the conditional CDFs $F_h$ and $F_\ell$ are continuous, it follows that $\eps_F > 0$. For ease of notation, let $\tau_t = \tau_t(\nu_t)$.
By Bayes' rule, $b_t = h$ if and only if
\[
\frac{\nu_t}{1-\nu_t} \geq \frac{1-\tau_t}{1+\tau_t}.
\]
Recall that the privately informed agent chooses $a_t = h$ if and only if $q_t \geq c_t(\nu_t)$, where, by  \eqref{eq:transfer_cut}, the cutoff $c_t(\nu_t)$ satisfies
\[
\frac{c_t(\nu_t)}{1-c_t(\nu_t)} = \frac{1-\tau_t}{1+\tau_t} \times \frac{1-\nu_t}{\nu_t}.
\]
Now, suppose $b_t =h$. 
This implies $\frac{\nu_t}{1-\nu_t} \geq \frac{1-\tau_t}{1+\tau_t}$, and so we have 
\[
 c_t(\nu_t) \le 1/2.
\]
If $\theta = h$, then the inequality holds trivially since $\PP[b_t \neq \theta] = 0$. If $\theta = \ell$, then $\PP[b_t \neq \theta] = 1$. However, for agent $t$ who observes both $q_t$ and $\nu_t$, she chooses $a_t = h$ if $q_t \geq c_t(\nu_t)$. Since $c_t(\nu_t) \le 1/2$, we have
    \[
    \PP_\ell[a_t = h \mid \nu_t, \tau_t] = \PP_\ell[q_t \ge c_t(\nu_t)] \geq \PP_\ell[q_t \ge 1/2] \geq \eps_F.
    \]
Hence, $\PP_\ell[a_t = h] \ge \eps_F \cdot 1 = \eps_F \cdot \PP_\ell[b_t \neq \theta]$. The case where $b_t =\ell$ follows from an analogous argument. Combining all cases, we have established that for any $\nu_t$ and its associated $\tau_t$:
\[
\PP[a_t \neq \theta \mid \nu_t] \geq \eps_F \cdot \PP[b_t \neq \theta \mid \nu_t].
\]
\end{proof}

The following lemma connects the exponent $\alpha$ with the tail behavior of $F_h$ and $F_\ell$. It can be derived from a standard result (see, e.g., \cite{hann2018speed}, \cite{rosenberg2019efficiency} and \cite{arieli2025hazards}). We provide proofs below for completeness.

\begin{lemma} \label{lem:tail_reg}
    Suppose $F(q) = \Theta(q^\alpha)$. Then $F_\ell(q) = \Theta(q^\alpha)$ and $F_h(q) = \Theta(q^{\alpha+1})$.
\end{lemma}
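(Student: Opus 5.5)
The plan is to express $F_h$ and $F_\ell$ through the unconditional CDF $F$, using the fact that $q_t$ is the posterior belief given the signal. Since $q$ is a sufficient statistic, the posterior given $q_t = q$ equals $q$. With a uniform prior this gives the Radon–Nikodym identities
\[
dF_h(x) = 2x\,dF(x) \quad\text{and}\quad dF_\ell(x) = 2(1-x)\,dF(x).
\]
These are the standard ``no-introspection'' identities; they follow from $\PP[\theta=h\mid q_t=x]=x$ and $F=\tfrac12(F_h+F_\ell)$. Integrating yields
\[
F_h(q) = 2\int_0^q x\,dF(x) \quad\text{and}\quad F_\ell(q) = 2\int_0^q (1-x)\,dF(x).
\]

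The bound for $F_\ell$ is immediate. For $q\le 1/2$ we have $1-x\in[1/2,1]$ on $[0,q]$, so
\[
F(q) \le F_\ell(q) \le 2F(q).
\]
Hence $F_\ell(q)=\Theta(q^\alpha)$ directly from $F(q)=\Theta(q^\alpha)$.

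For $F_h$ I will integrate by parts:
\[
F_h(q) = 2\Bigl(qF(q) - \int_0^q F(x)\,dx\Bigr).
\]
The upper bound follows at once: $F_h(q)\le 2qF(q) = O(q^{\alpha+1})$.

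For the lower bound I will not rely on the integration-by-parts formula, because the two terms could nearly cancel. Instead I will use monotonicity on a subinterval. For $\lambda\in(0,1)$,
\[
F_h(q) \ge 2\int_{(\lambda q,\,q]} x\,dF(x) \ge 2\lambda q\,\bigl(F(q)-F(\lambda q)\bigr).
\]
Let $c,C$ be constants with $c x^\alpha \le F(x)\le C x^\alpha$ for small $x$. Then
\[
F(q) - F(\lambda q) \ge (c - C\lambda^\alpha)\,q^\alpha,
\]
and I will choose $\lambda$ small enough that $C\lambda^\alpha < c/2$. This gives $F_h(q) \ge \lambda c\, q^{\alpha+1}$ for all sufficiently small $q$.

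The main obstacle is this lower bound for $F_h$. Tail-regularity only sandwiches $F$ between two multiples of $q^\alpha$, so $F$ may oscillate, and a naive argument could fail. Choosing $\lambda$ so that $C\lambda^\alpha<c/2$ guarantees a definite amount of mass in $(\lambda q, q]$, which resolves the issue. The only other point to check is justifying the density identities, which is standard (Smith–Sørensen).
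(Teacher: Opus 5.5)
Your proof is correct and is essentially the paper's argument: the paper splits $\int_0^q F(x)\,dx$ at $mq$ inside the integrated-by-parts formula, which amounts to your bound $F_h(q)\ge 2mq\,(F(q)-F(mq))$, and then chooses $m$ with $Cm^\alpha\le c/2$ exactly as you choose $\lambda$. The remaining differences are cosmetic. You work directly with the Stieltjes form $F_h(q)=2\int_0^q x\,dF(x)$, so the cancellation you worried about in the integration-by-parts route never actually arises there. For $F_\ell$ you use the sandwich $F\le F_\ell\le 2F$, where the paper uses $|F_\ell-2F|\le 3qF$.
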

\begin{proof}
      Recall that $F$ is the unconditional cdf, while $F_\ell$ and $F_h$ are the conditional cdfs given $\theta=\ell$ and $\theta=h$. For any $q$, these conditional cdfs satisfy
\begin{equation}\label{eq:conditionalcdf}
\begin{aligned}
    F_h(q)&=2 \Big(q F(q) - \int_0^qF(x)\,dx \Big),\\
    F_\ell(q)&=2 \Big((1-q) F(q) + \int_0^qF(x)\,dx \Big).
\end{aligned}
\end{equation}
Note that $|F_\ell - 2F(q)| \leq 3q F(q)$ and that $\lim_{q \to 0} F_\ell(q)/F(q)=2$. So, if $F(q)=\Theta(q^\alpha)$, then    $F_\ell(q)=\Theta(q^\alpha)$. 

If $F(q)= \Theta(q^\alpha)$, then there exist constants $0<c\le C<\infty$ such that for all small $q$, $cq^\alpha\le F(q)\le Cq^\alpha$. As $F_h(q)\le 2qF(q)$, this gives us the upper bound: $F_h(q)\le 2Cq^{\alpha+1}$.  

 For the lower bound, fix a small $q$ and some $m\in(0,1)$. We can write the integral as 
\[
\int_0^q F(x)\,dx
=\int_0^{mq} F(x)\,dx+\int_{mq}^q F(x)\,dx .
\]
Since $F$ is nondecreasing, we have $F(x)\le F(mq)$ for $x\le mq$ and
$F(x)\le F(q)$ for $x\in[mq,q]$. Hence $\int_0^{mq} F(x)\,dx \le mq\,F(mq)$, and 
$\int_{mq}^q F(x)\,dx \le (1-m)q\,F(q)$. Adding the two bounds and using the fact that $F(mq)\le C(mq)^\alpha$ gives 
\[
\int_0^q F(x)\,dx
\le Cm^{\alpha+1}q^{\alpha+1}+(1-m)q\,F(q).
\]

Using $F(q)\ge cq^\alpha$, we have
$Cm^{\alpha+1}q^{\alpha+1}\le \frac{Cm^{\alpha+1}}{c}\,qF(q)$.
Choose $m$ small enough so that $\frac{Cm^{\alpha+1}}{c}\le \frac m2$. Then
\[
\int_0^q F(x)\,dx \le \Big(1-\frac m2\Big)qF(q).
\]

Plugging into \eqref{eq:conditionalcdf} yields
\[
F_h(q)\ge 2\Big(qF(q)-\Big(1-\frac m2\Big)qF(q)\Big)=m\,qF(q)\ge mc\,q^{\alpha+1}.
\]
This provides the lower bound. Together with the upper bound, it follows that $F_h(q)=\Theta(q^{\alpha+1})$.
\end{proof}

The next lemma characterizes the asymptotic size of the log-likelihood increments $U_t^h(\nu)$ and $U_t^\ell(\nu)$ in terms of the cutoff $c_t(\nu)$, which will be useful in proving Theorem~\ref{thm:coarse-feedback}. For notational convenience, we write $\mu_t = c_t(\nu)$ as the private-belief cutoff that agent $t$ uses to choose an action. We will use the Landau notation, $o(\cdot)$ and $O(\cdot)$, i.e., we write $f(x) = o(g(x))$ if $\lim_{x \to 0} \frac{f(x)}{g(x)} = 0$, and $f(x) = O(g(x))$ if $\limsup_{x \to 0} \frac{f(x)}{g(x)} < \infty$.

\begin{lemma} \label{lem:approx_llr}
    For $\mu_t$ close to $0$, one has $ U_t^h(\nu) = \Theta(\mu_t^\alpha) \text{~and~} U_t^\ell(\nu) = \log(\mu_t) + O(1)$, that is 
    \[0 < \liminf_{\mu_t \to 0} \frac{U_t^h(\nu)}{\mu_t^\alpha} \le \limsup_{\mu_t \to 0} \frac{U_t^h(\nu)}{\mu_t^\alpha} < \infty,\] 
    and there exists a constant $C > 0$ such that for all sufficiently small $\mu_t$,
    \[
        |U_t^\ell(\nu) - \log(\mu_t)| \le C.
    \]
\end{lemma}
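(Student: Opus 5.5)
The plan is to work directly from the definitions of $U_t^h(\nu)$ and $U_t^\ell(\nu)$, using Lemma~\ref{lem:tail_reg}. That lemma gives constants $0<c_1\le C_1<\infty$ and $0<c_2\le C_2<\infty$ such that, for all sufficiently small $q$,
\[
c_1 q^{\alpha} \le F_\ell(q) \le C_1 q^\alpha
\qquad\text{and}\qquad
c_2 q^{\alpha+1}\le F_h(q)\le C_2 q^{\alpha+1}.
\]
The increments depend on $\nu$ and $t$ only through $\mu_t=c_t(\nu)$. Hence it suffices to analyze the two functions $g^h(\mu)=\log\frac{1-F_h(\mu)}{1-F_\ell(\mu)}$ and $g^\ell(\mu)=\log\frac{F_h(\mu)}{F_\ell(\mu)}$ as $\mu\to0$.

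\textbf{The low-action increment.} Write
\[
g^\ell(\mu)=\log F_h(\mu)-\log F_\ell(\mu).
\]
The two-sided bounds above give
\[
\log\frac{c_2}{C_1}+\log\mu \;\le\; g^\ell(\mu)\;\le\; \log\frac{C_2}{c_1}+\log\mu .
\]
Taking $C=\max\{|\log(c_2/C_1)|,|\log(C_2/c_1)|\}$ yields $|U_t^\ell(\nu)-\log\mu_t|\le C$ for all small $\mu_t$.

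\textbf{The high-action increment.} Write
\[
g^h(\mu)=\log\bigl(1-F_h(\mu)\bigr)-\log\bigl(1-F_\ell(\mu)\bigr).
\]
Both $F_h(\mu)$ and $F_\ell(\mu)$ tend to $0$ as $\mu\to0$, so I will use the elementary inequality $x\le-\log(1-x)\le x+x^2$, valid for $x\in[0,1/2]$. This gives
\[
F_\ell(\mu)-F_h(\mu)-F_h(\mu)^2\;\le\; g^h(\mu)\;\le\; F_\ell(\mu)+F_\ell(\mu)^2-F_h(\mu).
\]
For the upper bound, drop $-F_h$ and obtain $g^h\le 2F_\ell(\mu)\le 2C_1\mu^\alpha$.

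For the lower bound, $F_h(\mu)\le C_2\mu^{\alpha+1}=o(\mu^\alpha)$ and $F_h^2=o(\mu^\alpha)$, so $g^h(\mu)\ge c_1\mu^\alpha-o(\mu^\alpha)\ge \tfrac{c_1}{2}\mu^\alpha$ for small $\mu$. Together these give $0<\liminf g^h/\mu^\alpha\le\limsup g^h/\mu^\alpha<\infty$.

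\textbf{Where care is needed.} The only real point is that the lower bound on $U^h$ requires the gap $F_\ell-F_h$ to be of exact order $\mu^\alpha$. This follows because $F_h$ is of strictly smaller order by Lemma~\ref{lem:tail_reg}; that is precisely where the extra factor $q$ in $F_h=\Theta(q^{\alpha+1})$ is used.

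It remains to check that the constants are uniform in $t$ and $\nu$. This holds because everything depends only on $\mu_t$ and on the fixed distribution $F$.
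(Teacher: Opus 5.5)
Your proof is correct and follows essentially the same approach as the paper. The $U_t^\ell$ bound is identical, and for $U_t^h$ you use the same kind of elementary bounds on $-\log(1-x)$ together with Lemma~\ref{lem:tail_reg}. The only difference is that you bound $F_\ell$ and $F_h$ directly, whereas the paper passes through the unconditional $F$ via $F_h\le 2\mu F$ and $F_\ell\ge 2(1-\mu)F$. Your route gives $U_t^h=\Theta(\mu_t^\alpha)$ directly, rather than the paper's sharper asymptotic $U_t^h\sim 2F(\mu_t)$; it is cleaner and suffices for the statement.
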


\begin{proof} 
Recall from \eqref{eq:LLR} that
\[
U_t^h(\nu) = \log \frac{1-F_h(\mu_t)}{1-F_\ell(\mu_t)}, \quad \text{and}\quad U_t^\ell(\nu) = \log \frac{F_h(\mu_t)}{F_\ell(\mu_t)}.
\]
By the assumption of tail regularity and Lemma~\ref{lem:tail_reg},  there exist constants $C_h \geq c_h>0$ and $C_\ell \geq c_\ell >0$ such that for all sufficiently small $\mu$,
$F_h(\mu)\in[c_h\mu^{\alpha+1},\,C_h\mu^{\alpha+1}]$ and
$F_\ell(\mu)\in[c_\ell\mu^\alpha,\,C_\ell\mu^\alpha]$. Substituting these into the expression for $U^\ell_t$, we obtain 
\[
\log\!\Big(\frac{c_h}{C_\ell}\mu_t\Big)
\le U_t^\ell(\nu)
\le \log\!\Big(\frac{C_h}{c_\ell}\mu_t\Big),
\]
which implies
\[
U_t^\ell(\nu)=\log(\mu_t)+O(1).
\]
From Lemma~\ref{lem:tail_reg}, we have $F_h(\mu_t)=\Theta(\mu_t^{\alpha+1})$ and $F_\ell(\mu_t)=\Theta(\mu_t^{\alpha})$, so    $F_h(\mu_t)=o(F_\ell(\mu_t))$. We rewrite $U^h_t$ as
\[
U_t^h(\nu)
=\log\!\left(\frac{1-F_h(\mu_t)}{1-F_\ell(\mu_t)}\right) \leq -\log (1-F_\ell(\mu_t)). 
\] 
Since $F_\ell(\mu_t) \leq 2 F(\mu_t)$ from Lemma~\ref{lem:tail_reg}, we have $U_t^h(\nu) \leq -\log (1-2F(\mu_t))$. Using the fact that for small $x$, $-\log(1-x) \leq x+ x^2$, it follows that
\[
U_t^h(\nu) \leq 2F(\mu_t)(1 + 2F(\mu_t)),
\]
which establishes an upper bound. To obtain a lower bound, notice that since $F_h(\mu_t) \leq 2 \mu_t F(\mu_t)$ and $F_{\ell}(\mu_t) \ge 2(1-\mu_t)F(\mu_t)$, substituting these bounds into $\ee^{U_t^h(\nu)}$ yields
\[
\ee^{U_t^h(\nu)} = \frac{1-F_h(\mu_t)}{1-F_{\ell}(\mu_t)} \ge \frac{1-2\mu_t F(\mu_t)}{1-2(1-\mu_t)F(\mu_t)}.
\]
Using that $1/(1-x) \ge 1+ x + x^2$ for $x\in[0,1)$ for the denominator, 
\begin{align*}
\ee^{U_t^h(\nu)} &\geq \big(1-2\mu_t F(\mu_t)\big)\big( 1 + 2(1-\mu_t)F(\mu_t) + (2(1-\mu_t)F(\mu_t))^2 \big)\\
& = 1 + (2 - 4\mu_t)F(\mu_t) + (4 - 12\mu_t + 8\mu_t^2)F(\mu_t)^2 - (8\mu_t - 16\mu_t^2 + 8\mu_t^3)F(\mu_t)^3\\
& = 1 + 2F(\mu_t) - 4\mu_t F(\mu_t) + 2(2-O(\mu_t))F(\mu_t)^2
\end{align*}
Applying the logarithm to both sides and using  $\log (1+x) \geq x -\frac{x^2}{2}$ for small $x$, we have 
\begin{align*} 
U_t^h(\nu) &\ge \log\left( 1 + 2F(\mu_t) - 4\mu_t F(\mu_t) + (4 - O(\mu_t))F(\mu_t)^2 \right) \\
&\geq 2F(\mu_t) - 4\mu_t F(\mu_t) + (2 - O(\mu_t))F(\mu_t)^2 + o(F(\mu)^2) \\
& \geq  2F(\mu_t) \big( 1 - 2\mu_t + (1 - O(\mu_t))F(\mu_t) + o(F(\mu_t)) \big)
\end{align*}
As $\mu_t \to 0$, $F(\mu_t)\to 0$, the terms $-2\mu_t$ and $(1 - O(\mu_t))F(\mu_t)$ both vanish. Hence, the above upper and lower bounds imply that $\lim_{\mu_t\to 0} \frac{U^h_t(\nu)}{2F(\mu_t)} =1$. By the tail regularity assumption, we have $F(\mu_t) = \Theta(\mu_t^{\alpha})$, and thus it follows that $U^h_t(\nu) = \Theta(\mu_t^\alpha)$.
     \end{proof}

The next lemma establishes the monotonicity of the expected absolute log-likelihood ratio and provides a decomposition of its increments, which will be useful in proving Lemma~\ref{lem:bounds_mistake}.
\begin{lemma}\label{lem:A}
The sequence $(\EE[|\ell_t|])_{t\ge 0}$ is nondecreasing. Furthermore, the increments satisfy
\[
\EE[|\ell_{t+1}|]-\EE[|\ell_t|]=\EE[\Delta(\pi_t,\nu_t)],
\]
where
\[
\Delta(\pi_t,\nu_t)
= f(\pi_t,\nu_t)\big(|\ell_t+U_t^\ell(\nu_t)|-|\ell_t|\big)
+\big(1-f(\pi_t,\nu_t)\big)\big(|\ell_t+U_t^h(\nu_t)|-|\ell_t|\big),
\]
and $f(\pi_t,\nu_t)=\pi_t F_h(c_t(\nu_t)) + (1- \pi_t) F_\ell(c_t(\nu_t)).$
\end{lemma}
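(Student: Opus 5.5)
The plan is to condition on the planner's belief and the disclosed signal, $(\pi_t,\nu_t)$, and to work with the conditional law of the action $a_t$ given $(\pi_t,\nu_t)$. This law is computed from the planner's information alone. Since $\ell_t$ is a function of $H_t$, we can write $\ell_{t+1}=\ell_t+U_t^{a_t}(\nu_t)$ by \eqref{eq:LLR}. The key question is therefore the conditional probability that $a_t=\ell$. Because $\nu_t$ is generated from $\pi_t$ alone (possibly with independent randomization), $\nu_t$ is conditionally independent of $\theta$ given $\pi_t$. Hence $\PP[\theta=h\mid \pi_t,\nu_t]=\pi_t$. The private signal $q_t$ is independent of the history given $\theta$, so
\[
\PP[a_t=\ell\mid \pi_t,\nu_t]=\pi_t F_h(c_t(\nu_t))+(1-\pi_t)F_\ell(c_t(\nu_t))=f(\pi_t,\nu_t).
\]
Writing $\EE[|\ell_{t+1}|-|\ell_t|]=\EE\big[\EE[|\ell_{t+1}|-|\ell_t|\mid \pi_t,\nu_t]\big]$ and splitting on the two values of $a_t$ then gives the stated identity with $\Delta(\pi_t,\nu_t)$ directly. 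This is the decomposition claim. The only care needed is the conditional-independence point: the disclosure randomization must not depend on $\theta$ beyond $\pi_t$, which holds by construction.

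For monotonicity, I will show that $\Delta(\pi,\nu)\ge 0$ pointwise, or at least that $\EE[\Delta\mid\pi_t]\ge0$. The natural route is a martingale argument. Given $(\pi_t,\nu_t)$, the likelihood ratio $\pi_{t+1}/(1-\pi_{t+1})$ is a martingale under $\PP_\ell$, but $\ell_t$ itself is not a martingale under $\PP$. Instead I will use that $|\ell|$ is the log-likelihood ratio of the more likely state, and compute it state by state. Concretely, I will write $\EE[|\ell_{t+1}|\mid H_t,\nu_t]$ as
\[
\pi_t\,\EE_h[\cdot]+(1-\pi_t)\,\EE_\ell[\cdot].
\]
Under $\PP_h$, the increment $U_t^{a_t}$ has nonnegative expectation, because it is a Kullback–Leibler divergence between the action laws. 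Symmetrically, under $\PP_\ell$ its expectation is nonpositive.

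A cleaner approach is to apply convexity. For each $(\pi,\nu)$, the posterior $\pi_{t+1}$ is a mean-preserving spread of $\pi_t$. The function $g(\pi)=|\log(\pi/(1-\pi))|$ is not convex on $(0,1)$, so I will instead use the function $\phi(\pi)=\pi\ell-\ldots$. More precisely, I will note that
\[
\EE[|\ell_{t+1}|]=\EE\big[\pi_{t+1}\,\ell_{t+1}\,\mathrm{sgn}\ldots\big].
\]
I expect to settle on the following: since $\PP[\theta=h\mid H_{t+1}]=\pi_{t+1}$, we have
\[
\EE[|\ell_{t+1}|]=\EE\big[\,|\log\tfrac{\pi_{t+1}}{1-\pi_{t+1}}|\,\big],
\]
and $\EE[|\ell_{t+1}|\mid H_t,\nu_t]\ge|\EE[\ell_{t+1}\mid \ldots]|$ fails in general.

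So the concrete plan for monotonicity is Jensen's inequality applied to $\EE[|\ell_{t+1}|\mid\pi_t,\nu_t]$ via the state-wise decomposition. If $\ell_t\ge0$, then $|\ell_{t+1}|\ge\ell_{t+1}$. Taking expectations under the mixture $\pi_t\PP_h+(1-\pi_t)\PP_\ell$ gives
\[
\EE[\ell_{t+1}-\ell_t\mid\cdot]=\pi_t D(P_h\|P_\ell)-(1-\pi_t)D(P_\ell\|P_h),
\]
where $P_\theta$ is the law of $a_t$ in state $\theta$. This difference can be negative, so the plain bound $|x|\ge x$ is insufficient.

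I therefore expect the main obstacle to be the nonnegativity. My first attempt will be the following fact: $h(\pi):=\EE_{\pi}$-expected $|\ell|$ equals a convex function of the likelihood ratio after change of measure. Specifically, for $L=\pi/(1-\pi)$ we have
\[
\EE[|\log L_{t+1}|]=\EE_\ell\big[(1-\pi_{t+1})^{-1}\ldots\big],
\]
and I will rewrite this as $\EE_\ell\big[(1+L_{t+1})\,|\log L_{t+1}|\big]\cdot\tfrac{1}{2}$, using the uniform prior so that $d\PP/d\PP_\ell=(1+L)/2$. Under $\PP_\ell$ the process $L_t$ is a martingale, and the function $\psi(x)=(1+x)|\log x|$ is convex on $(0,\infty)$: on $x\ge1$ it is $(1+x)\log x$, which is convex; on $x\le1$ it is $-(1+x)\log x$, whose second derivative is $1/x^2-1/x\ge0$; and at the kink $x=1$ the slope jumps from $-2$ to $+2$. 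Conditional Jensen's inequality under $\PP_\ell$ then gives
\[
\EE[|\ell_{t+1}|]\ge\EE[|\ell_t|],
\]
which proves that the sequence is nondecreasing.
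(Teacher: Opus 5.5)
Your proof is correct. The decomposition part is the paper's argument: condition on $(\pi_t,\nu_t)$, identify $\PP[a_t=\ell\mid\pi_t,\nu_t]=f(\pi_t,\nu_t)$, and use the tower property. You also usefully make explicit the step $\PP[\theta=h\mid\pi_t,\nu_t]=\pi_t$, which the paper uses silently.

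For monotonicity, your final argument is valid: conditional Jensen under $\PP_\ell$ for the likelihood-ratio martingale $L_t$, with $\psi(x)=(1+x)|\log x|$ and $d\PP/d\PP_\ell=(1+L_t)/2$. But the detour rests on a false claim: $g(x)=\bigl|\log\frac{x}{1-x}\bigr|$ \emph{is} convex on $(0,1)$. Indeed $\frac{d^2}{dx^2}\log\frac{x}{1-x}=\frac{2x-1}{x^2(1-x)^2}$. Hence $g=-\log\frac{x}{1-x}$ is convex on $(0,\tfrac12]$, $g=\log\frac{x}{1-x}$ is convex on $[\tfrac12,1)$, and at $\tfrac12$ the slope jumps from $-4$ to $4$. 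The paper simply applies conditional Jensen to $g$ and the $\PP$-martingale $\pi_t$.

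Your route is the same inequality in disguise. The function $\psi(x)=(1+x)\,g\bigl(\tfrac{x}{1+x}\bigr)$ is the perspective of $g$, so $\psi$ is convex if and only if $g$ is. By the change-of-measure formula for conditional expectations, $\EE_\ell[\psi(L_{t+1})\mid\cdot]\ge\psi(L_t)$ is equivalent to $\EE[g(\pi_{t+1})\mid\cdot]\ge g(\pi_t)$. So it buys nothing beyond the paper's one-line argument.

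You can also get the pointwise statement you first set out to prove. Conditionally on $(\pi_t,\nu_t)$, $\pi_{t+1}$ takes two values with weights $f$ and $1-f$ averaging to $\pi_t$. The same Jensen step then gives $\Delta(\pi_t,\nu_t)\ge 0$ pointwise.

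One minor slip along the way: $\EE[|\ell_{t+1}|\mid\cdot]\ge|\EE[\ell_{t+1}\mid\cdot]|$ always holds. What fails in general is $|\EE[\ell_{t+1}\mid\cdot]|\ge|\ell_t|$. In a write-up, drop the abandoned attempts and keep only the Jensen argument.
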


\begin{proof}
Consider the function $g(x)=\left|\log\frac{x}{1-x}\right|$. Since $g$ is convex on $(0,1)$, Jensen’s inequality implies
\[
\EE[g(\pi_{t+1})\mid H_t]\ge g(\EE[\pi_{t+1}\mid H_t]),
\]
where $H_t = (a_1, \ldots, a_{t-1})$ is the history of actions available to the planner at time $t$. Since $(\pi_t)$ is a martingale, we have $\EE[\pi_{t+1}\mid  H_t]=\pi_t$. Thus, 
$$\EE[|\ell_{t+1}| \mid H_t] = \EE[g(\pi_{t+1})\mid H_t]\ge g(\pi_t) = |\ell_t|.$$ 
Taking the total expectations yields $\EE[|\ell_{t+1}|]\ge \EE[|\ell_{t}|]$. 

To derive the decomposition, we condition on the planner's belief $\pi_t$ and the induced social belief $\nu_t$. Note that the probability that agent $t$ chooses the low action is 
\[
\PP[a_t = \ell \mid \pi_t, \nu_t] = \pi_t F_h(c_t(\nu_t)) + (1- \pi_t) F_\ell(c_t(\nu_t))=: f(\pi_t, \nu_t).
\] 
The probability of agent $t$ choosing the high action is thus $1-f(\pi_t, \nu_t)$. By the law of iterative expectations and the update rule for $\ell_t$ in \eqref{eq:LLR}, we have 
\begin{align*}
 \EE[|\ell_{t+1}|]   &= \EE[\EE[|\ell_{t+1}| \mid \pi_t, \nu_t]]\\
    &  = \EE[f(\pi_t, \nu_t)\cdot  \lvert \ell_t + U_t^\ell(\nu_t) \rvert] + \EE[(1-f(\pi_t, \nu_t))\cdot \lvert \ell_t + U_t^h(\nu_t) \rvert ].
\end{align*}
Using the identity $|\ell_t| = f(\pi_t, \nu_t) |\ell_t| + (1 - f(\pi_t, \nu_t)) |\ell_t|$, we obtain 
\[
\EE[|\ell_{t+1}|] - \EE[|\ell_{t}|]= \EE \big[ \Delta(\pi_t, \nu_t) \big], 
\]
where $\Delta(\pi_t, \nu_t)$ is defined as the conditional expected change in the absolute LLR: 
\[
\Delta(\pi_t, \nu_t) = f(\pi_t, \nu_t)\cdot  (\lvert \ell_t + U_t^\ell(\nu_t) \rvert - \lvert \ell_t\rvert) + (1-f(\pi_t, \nu_t))\cdot(\lvert \ell_t + U_t^h(\nu_t) \rvert - \lvert \ell_t\rvert).
\] 
This completes the proof. 
\end{proof}

\begin{proof}[Proof of Lemma~\ref{lem:bounds_mistake}]
Let $b_t$ be the action chosen by an agent with social belief $\nu_t$ receiving transfer $\tau_t$. To simplify notation, throughout the proof, $C$ denotes a constant that may change from line to line. To establish the desired upper bound on $\EE[|\ell_{t+1}|] - \EE[|\ell_{t}|]$, by Lemma~\ref{lem:A}, it suffices to show that for some $M < \infty$,
\[
\EE \big[ \Delta(\pi_t, \nu_t) \big] \leq M \cdot \big(  \PP[b_t \neq \theta] + \EE[|\tau_t|]\big). 
\]
To this end, observe first that by the triangle inequality, one has
\begin{equation} \label{eq:diff1}
 \Delta(\pi_t, \nu_t) \leq f(\pi_t, \nu_t) \cdot |U_t^\ell(\nu_t)|  + (1-f(\pi_t, \nu_t)) \cdot |U_t^h(\nu_t)|.    
\end{equation}
Fix some small $\eps \in (0, 1/4]$ and let $\mu_t = c_t(\nu_t)$.  
We consider two cases: (i) the tail region where either $\mu_t\leq \eps$ or  $\mu_t \ge 1 - \eps$ and (ii) the middle region where $\mu_t \in (\eps, 1-\eps)$. 

\textbf{Case (i).} Suppose $\mu_t \le \eps$. 
Conditioning on the social belief $\nu_t$, since $\nu_t$ is a mean-preserving contraction of $\pi_t$, we have $\EE[\pi_t \mid \nu_t] = \nu_t$. Thus, the conditional expectation of $f(\pi_t, \nu_t)$ is 
\begin{align*}
\EE[f(\pi_t, \nu_t) \mid \nu_t]  = \EE[\pi_t \mid \nu_t] \cdot F_h(\mu_t) + (1-\EE[\pi_t \mid \nu_t])  \cdot  F_\ell(\mu_t)  = \nu_t F_h(\mu_t) + (1-\nu_t) F_\ell(\mu_t).
\end{align*}

By Lemma~\ref{lem:approx_llr} and Lemma~\ref{lem:tail_reg}, when $\mu_t \le \eps$, we have $|U_t^\ell(\nu_t)| \leq C \cdot |\log\mu_t|$ and $F_h(\mu_t) = \Theta(\mu_t^{\alpha+1})$ and $F_\ell(\mu_t) = \Theta(\mu_t^{\alpha})$. Using the inequality $(1-\nu_t) \le \mu_t + |\tau_t|$ from Lemma~\ref{lem:cutoff_shift},  we obtain 
\begin{align*} 
\EE[f(\pi_t, \nu_t) \mid \nu_t] \cdot |U_t^\ell(\nu_t)| &\leq C \big( \nu_t \mu_t^{\alpha+1} + (\mu_t + |\tau_t|) \mu_t^\alpha \big) |\log \mu_t| \\
&\leq C \big( \mu_t^{\alpha+1} |\log \mu_t| + |\tau_t| \mu_t^\alpha |\log \mu_t| \big)\\
& \leq C (\mu_t + |\tau_t|),
\end{align*}
where the last inequality follows since $x^\alpha |\log x| \to 0$ as $x \to 0$,  for any $\alpha>0$. By Lemma~\ref{lem:approx_llr}, $U^h_t(\nu_t) \leq C\mu_t^\alpha$, and since $\alpha \geq 1$, $U^h_t(\nu_t) \leq C\mu_t$ and thus 
it follows from \eqref{eq:diff1} that 
\begin{equation} \label{eq:xxx} \mathbb{E}[\Delta(\pi_t, \nu_t) \mathbbm{1}(\mu_t \le \eps) \mid \nu_t] \leq C (\mu_t + |\tau_t|). 
\end{equation}

To relate $\mu_t$ to the mistake probability, recall from Lemma~\ref{lem:cutoff_shift} that $|\mu_t - (1-\nu_t)| \leq |\tau_t|$. If $\nu_t \leq 1/2$, the condition $\mu_t \leq \eps$ implies $|\tau_t| \geq 1/2 - \eps \geq 1/4$, so $\mathbbm{1}(\nu_t \leq 1/2) \leq 4|\tau_t|$. Note that we can write 
\[
1-\nu_t \leq \min\{\nu_t, 1-\nu_t\} + \mathbbm{1}(\nu_t \leq 1/2),
\]
and by the triangle inequality, 
\[
\mu_t \leq (1-\nu_t)+ |\mu_t - (1-\nu_t)| \leq (1-\nu_t) + |\tau_t|.
\]
Thus, 
$$\mu_t \leq \min\{\nu_t, 1-\nu_t\} + C|\tau_t|.$$
It then follows from \eqref{eq:xxx} that
\[
 \EE[\Delta(\pi_t, \nu_t) \cdot \mathbbm{1}(\mu_t \le \eps)\mid \nu_t] \leq C \cdot ( \min\{\nu_t,1-\nu_t\} + |\tau_t|).
\]
 
Notice that $\min\{\nu_t, 1-\nu_t\}$ equals the mistake probability for an agent with induced social belief $\nu_t$ and zero transfer. Let $b^0_t$ denote the action chosen by such an agent under zero transfer. Then, $$\min\{\nu_t, 1-\nu_t\} = \PP[b^0_t \neq \theta \mid \nu_t].$$ 
Since $b^0_t$ is the Bayes-optimal action for an agent with belief $\nu_t$, any non-zero transfer $\tau_t$ that induces a different action $b_t \neq b^0_t$ can only increase the mistake probability:
$ \PP[b^0_t \neq \theta \mid \nu_t]\leq \PP[b_t \neq \theta \mid \nu_t]$. Hence, 
\begin{equation} \label{eq:mist_trans}
    \EE[\min\{\nu_t,1-\nu_t\}] = \PP[b^0_t \neq \theta] \leq \PP[b_t \neq \theta].
\end{equation} 
Finally, combining the above observations and taking expectations yields
 \begin{align*}
\EE[\Delta(\pi_t, \nu_t) \mathbbm{1}(\mu_t \le \eps)] \leq C \big( \PP[b_t \neq \theta] + \EE[|\tau_t|] \big).
 \end{align*}
The same bound holds for the right tail $\mu_t \geq 1-\eps$ by symmetry. 

\textbf{Case (ii).} Consider the case $\mu_t \in (\eps, 1-\eps)$. In this region, $\mu_t$ is bounded away from 0 and 1, ensuring the updates $|U_t^h|$ and $|U_t^\ell|$ are bounded by a constant $C$. It follows from \eqref{eq:diff1} that 
$\Delta(\pi_t, \nu_t)  \leq C$ and 
thus 
\[
\EE[ \Delta(\pi_t, \nu_t) \cdot \mathbbm{1}(\mu_t \in (\eps, 1-\eps))] \leq C \cdot \EE[\mathbbm{1}(\mu_t \in (\eps, 1-\eps))]
\]
If $\mu_t \in (\eps, 1-\eps)$ and $\nu_t 
\le \eps/2$ (or $\nu_t 
\geq 1-\eps/2$) then by Lemma~\ref{lem:cutoff_shift}, $|\tau_t| \geq|\mu_t - (1-\nu_t)|  \ge \eps/2$. Consequently, we can bound the indicator by 
\begin{align*}
	\mathbbm{1}(\mu_t \in (\eps, 1-\eps)) 
    &\le \mathbbm{1}(\nu_t \in (\eps/2,1-\eps/2)) + \mathbbm{1}(|\tau_t| \ge \eps/2)\\
    & \le \frac2\eps \min\{\nu_t, 1-\nu_t\} + \frac{2}{\eps} |\tau_t|.
\end{align*}
Substituting these back into the expectation, we thus obtain
\begin{align*}
\mathbb{E}[ \Delta(\pi_t, \nu_t) \cdot \mathbbm{1}(\mu_t \in (\eps, 1-\eps))] 
&\le C \cdot \mathbb{E}\left[ \frac{2}{\eps} \min\{\nu_t, 1-\nu_t\} + \frac{2}{\eps} |\tau_t| \right] \\
&\le C \cdot \left(\mathbb{P}[b_t \neq \theta] + \mathbb{E}[|\tau_t|]\right),
\end{align*}
where the second inequality follows from \eqref{eq:mist_trans}. 
This concludes the proof of Lemma~\ref{lem:bounds_mistake}.
    
\end{proof}

\begin{proof}[Proof of Theorem~\ref{thm:coarse-feedback}]
Let $a^x_t$ be the action chosen by the planner based on his belief $\pi_t$ at time $t$. Let $a^0_t$ be the action chosen by the agent with private belief $q_t$ and induced social belief $\nu_t$ under zero transfer.
As before, $b_t$ is the action chosen by an agent with induced social belief $\nu_t$ under transfer $\tau_t$, and $b^0_t$ is the action chosen by the same agent under zero transfer. 
To simplify notation, throughout the proof, $C$ denotes a constant that may change from line to line.

Recall that $p_t$ is the agent $t$'s posterior belief after combining her private belief $q_t$ with her induced social belief $\nu_t$. From the perspective of the agent, note that $\min\{p_t, 1-p_t\}$ equals her mistake probability under zero transfer, that is, $\min\{p_t, 1-p_t\} = \PP[a^0_t \neq \theta \mid p_t]$. Since transfers can only increase the mistake probability relative to the Bayes-optimal choice, we have  
\[
\EE[\min\{p_t,1-p_t\}] = \PP[a^0_t \neq \theta] \leq \PP[a_t \neq \theta].
\]
By Lemma~\ref{lem:inaccuracy-bound} (with zero transfers), there exists $\eps_F$ such that for all $t$, 
\[
\PP[a^0_t \neq \theta] \geq \eps_F \cdot \PP[b^0_t \neq \theta].
\]
Furthermore, since the planner's belief $\pi_t$ is more informative than agent $t$'s induced social belief $\nu_t$, one has $\PP[b^0_t \neq \theta] \geq \PP[a^x_t \neq \theta]$. Combining these yields
\begin{align} \label{eq:garble}
    \begin{aligned}
        \PP[a_t \neq \theta] & \ge \eps_F\cdot\PP[a^x_t \neq \theta].
    \end{aligned}
\end{align}

Suppose $\alpha \geq 1$ and by assumption we have $F = \Theta(q^\alpha)$.
Assume, towards a contradiction, that there exists a disclosure and transfer scheme such that learning is efficient  and the expected sum of absolute transfers is finite: 
\begin{align*}
\sum_{t\geq 1}^\infty \PP[a_t \neq \theta] < \infty 
\quad \text{and} \quad
	\sum_{t\geq 1}^\infty \EE[|\tau_t(\nu_t)|] < \infty.
\end{align*}
By Lemma~\ref{lem:bounds_mistake}, for $\alpha \geq 1$,  the incremental growth of the expected absolute log-likelihood ratio is bounded above by
\begin{align} \label{eq:target_final} \EE[|\ell_{t+1}|] - \EE[|\ell_t|] &\leq C \cdot \left(\PP[b_t \neq \theta] + \EE[|\tau_t|] \right) \nonumber \\ &\leq C \cdot \left(\PP[a_t \neq \theta] + \EE[|\tau_t|] \right), 
\end{align}
where the second inequality follows from Lemma~\ref{lem:inaccuracy-bound}. Summing \eqref{eq:target_final} over all $t$, the hypothesis implies that the sequence $(\EE[|\ell_t|])_{t \geq 1}$ is uniformly bounded:
\begin{align*}
    \lim_{t\to \infty} \EE[\lvert \ell_t \rvert] =  \EE[\lvert \ell_1 \rvert] + \sum_{t\geq 1} (\EE[\lvert \ell_{t+1} \rvert] - \EE[\lvert \ell_t \rvert]) < \infty.
\end{align*}
Notice that $\pi_t = \frac{\ee^{\ell_t}}{\ee^{\ell_t}+1}$ and thus the planner's mistake probability is 
$$\PP[a^x_t \neq \theta] = \EE[\min\{\pi_t, 1-\pi_t\}] =  \EE\Big[\frac{1}{\ee^{|\ell_t|}+1}\Big] \geq \frac{1}{\ee^{\EE[|\ell_t|]}+1},$$
where the inequality follows from Jensen's inequality. As $\EE[|\ell_t|]$ is bounded by some constant $L < \infty$ for all $t$, it follows that 
$$\PP[a^x_t \neq \theta] \geq \frac{1}{\ee^L + 1}  =: \delta > 0.$$
By \eqref{eq:garble}, this implies $\PP[a_t \neq \theta] \geq \delta \cdot \eps_F  >0$ for all $t$, which contradicts the assumption that $\sum_t \PP[a_t \neq \theta] < \infty$. 
\end{proof}

\bibliographystyle{plainnat}

\end{document}